\newtheorem{thm}{Theorem}
\newtheorem{lem}{Lemma}
\newtheorem{proposition}{Proposition}
\begin{document}

\title{{A Model for Randomized Resource Allocation \\in  Decentralized Wireless Networks}}
\author{ Kamyar Moshksar, Alireza Bayesteh and Amir K. Khandani \\
\small Coding \& Signal Transmission Laboratory (www.cst.uwaterloo.ca)\\
Dept. of Elec. and Comp. Eng., University of Waterloo\\ Waterloo, ON, Canada, N2L 3G1 \\
Tel: 519-725-7338, Fax: 519-888-4338\\e-mail: \{kmoshksa, alireza,
khandani\}@cst.uwaterloo.ca} \maketitle
\begin{abstract}
In this paper, we consider a decentralized  wireless communication
network with a fixed number $u$ of frequency sub-bands to be shared
among $N$ transmitter-receiver pairs. It is assumed that the number
of active users is a random variable with a given probability mass
function. Moreover, users are unaware of each other's codebooks and
hence, no multiuser detection is possible. We propose a randomized
Frequency Hopping (FH) scheme in which each transmitter randomly
hops over a subset of $u$ sub-bands from transmission to
transmission. Assuming all users transmit Gaussian signals, the
distribution of the noise plus interference is mixed Gaussian, which
makes calculation of the mutual information between the transmitted
and received signals of each user intractable.
 We derive lower and upper bounds
on the mutual information of each user and demonstrate that, for
large Signal-to-Noise Ratio (SNR) values, the two bounds coincide.
This observation enables us to compute the sum multiplexing
gain of the system and obtain the optimum hopping strategy for
maximizing this quantity. We compare the performance of the  FH
system with that of the Frequency Division (FD) system in terms of
the following performance measures: average sum multiplexing
gain $(\eta^{(1)})$, average minimum multiplexing gain per user
$(\eta^{(2)})$, minimum nonzero multiplexing gain per user
$(\eta^{(3)})$ and service capability ($\eta^{(4)}$). We show that
(depending on the probability mass function of the number of active
users) the FH system can offer a significant improvement in terms of
$\eta^{(1)}$, $\eta^{(2)}$ and $\eta^{(4)}$ (implying a more
efficient usage of the spectrum). It is also shown that
$\frac{1}{e}\leq\frac{\eta_{\mathrm{FH}}^{(3)}}{\eta_{\mathrm{FD}}^{(3)}}\leq
1$, i.e., the loss incurred in terms of $\eta^{(3)}$ is not more
than $\frac{1}{e}$.
\end{abstract}
\vskip 3cm
\begin{keywords}
Randomized Signaling, Frequency Hopping, Spectrum Sharing, Decentralized Networks, Mixed Gaussian Interference, Multiplexing Gain.
\end{keywords}
\section{Introduction}
\subsection{Motivation}
Increasing  demand for wireless applications on one hand, and the
limited available resources on the other hand, provoke more
efficient usage of such resources. Due to its significance, many
researchers have addressed the problem of resource allocation in
wireless networks.  One major challenge in wireless networks is the
destructive effect of multi-user interference, which degrades the
performance when multiple users share the spectrum. As such, an
efficient and low complexity resource allocation scheme that
maximizes the quality of service while mitigating the impact of the
multi-user interference is desirable.  The existing resource
allocation schemes are either \textit{centralized}, i.e., a central
controller manages the resources, or \textit{decentralized}, where
resource allocation is performed locally at each node. Due to the
complexity of adapting the centralized schemes to the network
structure (e.g. number of active users), these schemes are usually
designed for a fixed network structure. This makes inefficient
usage of resources because, in most cases, the number of active users may
be considerably less than the value assumed in the design process. On
the other hand, most of the decentralized resource allocation
schemes suffer from the complexity, either in the algorithm (e.g.
game-theoretic approaches involving iterative methods) or in the
hardware (e.g. cognitive radio). Therefore, it is of interest to
devise an efficient and low-complexity decentralized resource
allocation scheme, which is the main goal of this paper.

\subsection{Related Works}

\subsubsection{Centeralized Schemes}
In recent years, many centralized power and spectrum allocation
schemes have been studied in cellular and multihop wireless networks
\cite{ KumaranITWC0505,
ElBattITWC0104,HollidayISIT2004,WassermanITWC0603, HanITC0805,
KatzelaIPC0696, KianiISIT2006, LiangITIT1007}. Clearly, centralized
schemes perform better than the decentralized (distributed)
approaches, while requiring extensive knowledge of the network
configuration. In particular, when the number of nodes is large,
deploying such centralized schemes may not be practically feasible.

Traditional wireless systems aimed to avoid the interference among
users by using orthogonal transmission schemes. The most common
example is the Frequency Division (FD) system, in which different
users transmit over disjoint frequency sub-bands. The assignment of
frequency sub-bands is usually performed by a central controller.
Despite its simplicity, FD is shown to achieve the highest
throughput in certain scenarios. In particular, \cite{1} proves that
in a wireless network where interference is treated as noise
(no multi-user detection is performed), if the crossover gains are
sufficiently larger than the forward gains, FD is Pareto-rate-optimal. Due to
practical considerations, such FD systems usually rely on a fixed
number of frequency sub-bands. Hence, if the number of users
changes, the system is not guaranteed to offer the best possible
spectral efficiency  because, most of the time, the majority of the
potential users may be inactive.

\subsubsection{Decentralized Schemes}
In decentralized schemes, decisions concerning network resources
are made by individual nodes based on their local information.
Most of decentralized schemes reported in the literature rely on
either \textit{game-theoretic} approaches or \textit{cognitive
radios}. Cognitive radios \cite{2} have the ability to sense the
unoccupied portion of the available spectrum and use this
information in resource allocation. Fundamental limits of wireless
networks with cognitive radios are studied in \cite{3,4,5,7}.
Although cognitive radios avoid the use of a central controller,
they require sophisticated detection techniques for sensing the
spectrum holes and dynamic frequency assignment, which add to the
overall system complexity \cite{8}. Noting the above points, it is
desirable to have a decentralized frequency sharing strategy
without the need for cognitive radios, which allows the users to
coexist while utilizing the spectrum efficiently and fairly.

Being a standard technique in spread spectrum communications and due
to its interference avoidance nature, hopping is the
simplest spectrum sharing method to use in decentralized networks.
As different users typically have no prior information about the
codebooks of the other users, the most efficient method is avoiding
interference by choosing unused channels. As mentioned earlier,
searching the spectrum to find spectrum holes is not an easy task
due to the dynamic spectrum usage. As such, Frequency Hopping (FH) is a realization of a
transmission scheme without sensing, while avoiding the collisions
as much as possible.  Frequency Hopping is one of the standard
signaling schemes~\cite{15} adopted in ad-hoc networks. In short
range scenarios, bluetooth systems \cite{19,20,21} are the most
popular examples of a Wireless Personal Area Network (WPAN). Using
FH over the unlicensed ISM band, a bluetooth system provides robust
communication to unpredictable sources of interference. A
modification of Frequency Hopping, called Dynamic Frequency Hopping
(DFH), selects the hopping pattern based on interference
measurements in order to avoid dominant interferers. The performance
of a DFH scheme when applied to a cellular system is assessed in
\cite{22,23,24}.

In \cite{Jindal}, the authors consider the problem of bandwidth
partitioning in a decentralized wireless  network where different transmitters are connected to
different receivers through channels with similar path loss
exponent. Assuming the transmitters are scattered over the two
dimensional plane according to a Poisson point process, a fixed
bandwidth is partitioned into a certain number of sub-bands such
that the so-called transmission intensity in the network is
maximized while the probability of outage per user is below a
certain threshold. The transmission strategy is based on choosing
one sub-band randomly per transmission, which is a special case of
FH.

Frequency hopping is also proposed in \cite{7} in the context of
cognitive radios, where each cognitive transmitter selects a
frequency sub-band but quits transmitting if the sub-band is already
occupied by a primary user.

Recently, Orthogonal Frequency Division Multiplexing (OFDM) has been
considered as a promising technique in many wireless technologies.
OFDM partitions a wide-band channel to a group of narrow-band
orthogonal sub-channels. The popularity of OFDM motivates us to
consider a Frequency Hopping scheme operating over $u$ narrow-band
orthogonal frequency sub-bands. We note that the results of the
paper are valid in a general setup where  hopping is performed over
an arbitrary orthogonal basis. To make the presentation as simple as
possible, we take the orthogonal basis in frequency, which can be
realized in practice using OFDM systems.

\subsection{Contribution}
In this paper, we consider a decentralized wireless communication
network with a fixed number $u$ of frequency sub-bands to be shared
among $N$ transmitter-receiver pairs. It is assumed that the number
of active users is a random variable with a given probability mass
function. Moreover, users are unaware of each other's codebooks,
and hence, no multiuser detection is possible.  We propose a
randomized Frequency Hopping scheme in which the $i^{th}$
transmitter randomly hops over $v_{i}$ out of $u$ sub-bands from
transmission to transmission. Assuming i.i.d. Gaussian signals are
transmitted over the chosen sub-bands, the distribution of the noise
plus interference becomes mixed Gaussian, which makes the calculation
of the achievable rate complicated. The main contributions of the
paper are:
 \begin{itemize}
\item We derive lower and upper bounds
on the mutual information between the transmitted and received
signals of each user and demonstrate that, for large SNR values, the
two bounds coincide. Thereafter,  we are able to show that the achievable rate of
the $i^{th}$ user scales like $\frac{v_{i}}{2} \prod_{\substack{j=1\\j\neq i}}^{N}\left(1-\frac{v_{j}}{u}\right) \log\mathsf{SNR}$.

\item We show that each transmitter only needs the knowledge
of the number of active users in the network, the forward channel
gain and the maximum interference at its desired receiver to
regulate its transmission rate. Knowing these quantities, we
demonstrate how the $i^{th}$ user can achieve a multiplexing gain of
$\frac{v_{i}}{2} \prod_{\substack{j=1\\j\neq
i}}^{N}\left(1-\frac{v_{j}}{u}\right)$.

\item
We obtain the optimum design parameters $\{v_{i}\}_{i=1}^{N}$ in
order to maximize various performance measures.

\item
We compare the performance of the FH with that of the Frequency
Division in terms of the following performance measures:
average sum multiplexing gain $(\eta^{(1)})$, average minimum 
multiplexing gain per user $(\eta^{(2)})$, minimum nonzero
multiplexing gain per user $(\eta^{(3)})$ and service capability
($\eta^{(4)}$). We show that (depending on the probability mass
function of the number of active users) the FH system can offer a
significant improvement in terms of $\eta^{(1)}$, $\eta^{(2)}$, and
$\eta^{(4)}$ (implying a more efficient usage of the spectrum). It
is also shown that
$\frac{1}{e}\leq\frac{\eta_{\mathrm{FH}}^{(3)}}{\eta_{\mathrm{FD}}^{(3)}}\leq
1$, i.e., the loss incurred in terms of $\eta^{(3)}$ is not more
than $\frac{1}{e}$.

\end{itemize}

The paper outline is as follows. The system model is given in section
II. Section III offers an analysis of the achievable rates. Upper
bounds and lower bounds on the achievable rates of users are
presented in this section. In section IV, based on the results in
section III, we discuss how users in the FH system can fairly share
the spectrum. Comparison with the FD scheme in terms of different
performance measures is discussed in section V. Section VI offers a
comparison between two versions of the proposed FH, i.e., the robust
frequency hopping and adaptive frequency hopping. Finally, section
VII states the concluding remarks.

\subsection{Notation}
Throughout the paper, we use the notation $\mathrm{E} \{.\}$ for the
expectation operator. $\mathrm{Pr}\{\mathcal{E}\}$ denotes the
probability of an event $\mathcal{E}$, $\mathbb{1}(\mathcal{E})$ the
indicator function of an event $\mathcal{E}$ and $p_{X}(.)$  the
probability density function (PDF) of a random variable $X$. Also,
$\mathrm{I} (X;Y)$ denotes the mutual information between random
variables $X$ and $Y$ and $\mathrm{h} (X)$ denotes the differential entropy of
a continuous random variable $X$. Finally, the notation $f(\gamma)\sim
g(\gamma)$ implies
$\lim_{\gamma\rightarrow\infty}\frac{f(\gamma)}{g(\gamma)}=1$.

\section{System Model and Assumptions}
We consider a wireless network with $N$ users\footnote{Each user
consists of a transmitter-receiver pair.} operating over a spectrum
consisting of $u$ orthogonal sub-bands. The number of active users
is assumed to be a random variable with a given distribution, however, it is fixed during the whole transmission once it is set first. The
transmission blocks of each user comprise of an arbitrarily large
number of transmission slots. We remark that the results of this paper are valid regardless of having block synchronization among the users, however, we assume synchronization at the symbol level. It is assumed that the $i^{th}$ user
exploits $v_{i}(\leq u)$ out of the $u$  sub-bands in each
transmission slot and hops randomly to another set of $v_{i}$
frequency sub-bands in the next transmission slot.  This user
transmits independent real Gaussian signals of variance
$\frac{P}{v_{i}}$ over the chosen sub-bands, in which $P$ denotes
the total average power for each transmitter. Each receiver is
assumed to know the hopping pattern of its affiliated transmitter.
It is assumed that the users are not aware of each other's
codebooks and hence, no multiuser detection or interference
cancelation is possible at the receiver sides. The
static and non-frequency selective channel gain of the link
connecting the $i^{th}$ transmitter to the $j^{th}$ receiver is
shown by $h_{i,j}$. As it will be shown in (\ref{hoolooo}), the only
information each transmitter needs  in order to regulate its
transmission rate (focusing on the achieved multiplexing gain) is
its forward channel gain, the maximum interference level at its
associated receiver and the number of active users in the network.
This information can be obtained at the receiver side by
investigating the interference PDF and provided to the
corresponding receiver via a feedback link.

As all users hop over different portions of the spectrum from
transmission slot to transmission slot, no user is assumed to be capable of
tracking the instantaneous interference level. This assumption makes
the interference plus noise PDF at the receiver side of each user 
be mixed Gaussian. In fact, depending on different choices the other
users make to select the frequency sub-bands and values of the
crossover gains, the interference on each frequency sub-band at any
given receiver can have up to $2^{N-1}$ power levels. The vector
consisting of the received signals on the frequency sub-bands at the
$i^{th}$ receiver in a typical transmission slot is
\begin{equation}
\vec{Y}_{i}=h_{i,i}\vec{X}_{i}+\vec{Z}_{i},
\end{equation}
where $\vec{X}_{i}$ is the $u\times 1$ transmitted vector and
$\vec{Z_{i}}$ is the noise plus interference vector at the receiver
side of the $i^{th}$ user. Due to the fact that each transmitter
hops randomly from slot to slot, one may write $p_{\vec{X}_{i}}(.)$
as
\begin{eqnarray}
p_{\vec{X}_{i}}(\vec{x})=\sum_{C\in \mathfrak{C}}\frac{1}{{u\choose v_{i}}}g_{u}(\vec{x},C),
\end{eqnarray}
which corresponds to the mixed Gaussian distribution. In the above
equation, $g_{u}(\vec{x},C)$ denotes the PDF of a zero-mean $u\times
1$ jointly Gaussian vector of covariance matrix $C$ and the set
$\mathfrak{C}$ includes all $u\times u$ diagonal matrices in which
$v_{i}$ out of the $u$ diagonal elements are $\frac{P}{v_{i}}$ and
the rest are zero.  Denoting the noise plus interference on the
$j^{th}$ sub-band at the receiver side of the $i^{th}$ user by
$Z_{i,j}$ (the $j^{th}$ component of $\vec{Z}_{i}$), it is clear
that $p_{Z_{i,j}}(.)$ is not dependent on $j$. This is due to the
fact that the crossover gains are not frequency selective and
there is no particular interest to a specific frequency sub-band by
any user. We assume there are $L_{i}+1$ ($L_{i}\leq 2^{N-1}-1$)
possible non-zero power levels for $Z_{i,j}$, say
$\{\sigma^{2}_{i,l}\}_{l=0}^{L_{i}}$. Denoting the occurrence
probability of $\sigma^{2}_{i,l}$ by $a_{i,l}$, $p_{Z_{i,j}}(.)$
identifies a mixed Gaussian PDF as
\begin{equation}
\label{hf}
p_{Z_{i,j}}(z)=\sum_{l=0}^{L_{i}}\frac{a_{i,l}}{\sqrt{2\pi}\sigma_{i,l}}\exp \left(-\frac{z^2}{2\sigma_{i,l}^{2}} \right),
\end{equation}
where  $\sigma^{2}=\sigma_{i,0}^{2}< \sigma_{i,1}^{2}<
\sigma_{i,2}^{2}<...<\sigma_{i,L_{i}}^{2}$ ($\sigma^{2}$ is the
ambient noise power). We notice that for each $l\geq 0$, there
exists a $c_{i,l} \geq 0$ such that
$\sigma_{i,l}^{2}=\sigma^{2}+c_{i,l}P$ where
$0=c_{i,0}<c_{i,1}<c_{i,2}<...<c_{i,L_{i}}$. One may write $Z_{i,j}=
\sum_{\substack{k=1\\k\neq
i}}^{N}\epsilon_{k,j}h_{k,i}X_{k,j}+\nu_{i,j}$ where $\epsilon_{k,j}$ is a Bernoulli random variable showing if the
$k^{th}$ user has utilized the $j^{th}$ sub-band, $X_{k,j}$ is
the signal of the $k^{th}$ user sent on the $j^{th}$ sub-band (assuming it has utilized the $j^{th}$ sub-band),
 and $\nu_{i,j}$ is
the ambient noise which is a zero-mean Gaussian random variable with
variance $\sigma^{2}$. The ratio $\frac{P}{\sigma^{2}}$ is taken as
a measure of SNR and is denoted by $\gamma$ throughout the paper.

\section{Analysis of the Achievable Rate}
Let us denote the achievable rate of the $i^{th}$ user by
$\mathscr{R}_{i}$. It can be observed that the communication channel
of this user is a channel with state $S_{i}$, the hopping pattern of
the $i^{th}$ user, which is independently changing over different
transmission slots, and is known to both the transmitter and the receiver.
The achievable rate of such a channel is given by
\begin{equation}
\mathscr{R}_{i}=\mathrm{I}(\vec{X}_{i};\vec{Y}_{i}\vert S_{i})=\sum_{s_{i}\in \mathfrak{S}_{i}}\Pr(S_{i}=s_{i})\mathrm{I}(\vec{X}_{i};\vec{Y}_{i}\vert S_{i}=s_{i}),
\end{equation}
where $\mathrm{I}(\vec{X}_{i};\vec{Y}_{i}\vert S_{i}=s_{i})$ is the
mutual information between $\vec{X_{i}}$ and $\vec{Y_{i}}$ for the
specific sub-band selection corresponding to $S_{i}=s_{i}$. The set
$\mathfrak{S}_{i}$ denotes all possible selections of $v_{i}$ out of
the $u$ sub-bands. As $p_{\vec{Z_{i}}}(.)$ is a symmetric density
function, meaning all its components have the same PDF given in
(\ref{hf}), we deduce that $\mathrm{I}(\vec{X}_{i};\vec{Y}_{i}\vert
S_{i}=s_{i})$ is independent of $s_{i}$. Therefore, to calculate
$\mathscr{R}_{i}$, we may assume any specific sub-band selection for
the $i^{th}$ user in $\mathfrak{S}_{i}$, say the first $v_{i}$
sub-bands. Denoting this specific state by $s^{*}_{i}$, we get
\begin{equation}
\mathscr{R}_{i}=\mathrm{I}(\vec{X}_{i};\vec{Y}_{i}\vert S_{i}=s^{*}_{i}).
\end{equation}
In this case, we denote $\vec{Y}_{i}$ and $\vec{X}_{i}$ by
$\vec{Y}_{i}(s^{*}_{i})$ and $\vec{X}_{i}(s^{*}_{i})$, respectively.
Obviously, we have
\begin{equation}
\label{lala}
\mathscr{R}_{i}=\mathrm{I}(\vec{X}_{i}(s^{*}_{i});\vec{Y}_{i}(s^{*}_{i}))=\mathrm{h}(\vec{Y}_{i}(s^{*}_{i}))-\mathrm{h}(\vec{Z}_{i}).
\end{equation}
Because $\vec{Y}_{i}(s^{*}_{i})$ and $\vec{Z}_{i}$ have mixed Gaussian
distribution, there is no closed-form expression for the
differential entropy of these vectors. As such, we provide an upper
bound and a lower bound on the achievable rate of each user in the
following subsections and show that these bounds converge in the
asymptotic high SNR regime.

 \subsection{Upper Bound on The Achievable Rates}
In this section, we develop an upper bound
$\mathscr{R}_{i,\mathrm{ub}}$ on the achievable rate of the $i^{th}$
user that is tight enough to ensure that
$\mathscr{R}_{i,\mathrm{ub}}-\mathscr{R}_{i}$ does not increase
unboundedly as SNR increases. The idea behind this upper bound is
the convexity of $\mathscr{R}_{i}$ in terms of
$p_{\vec{Y}_{i}(s^{*}_{i})\vert \vec{X}_{i}(s^{*}_{i})}(.|.)$.

\begin{thm}
There exists an upper bound on the achievable rate of the $i^{th}$ user given by
 \begin{equation}
\label{loopy}
\mathscr{R}_{i,\mathrm{ub}}=\frac{1}{2}v_{i}\prod_{\substack{k=1\\k\neq i}}^{N}\left(1-\frac{v_{k}}{u}\right)\log\left(1+\frac{\vert h_{i,i}\vert^{2}\gamma}{v_{i}}\right)+\widetilde{\mathscr{R}}_{i,\mathrm{ub}}
\end{equation}
where
$\lim_{\gamma\rightarrow\infty}\widetilde{\mathscr{R}}_{i,\mathrm{ub}}<\infty$.
In particular,
$\mathscr{R}_{i,\mathrm{ub}}\sim\frac{1}{2}v_{i}\prod_{\substack{k=1\\k\neq
i}}^{N}\left(1-\frac{v_{k}}{u}\right)\log\gamma$.
\end{thm}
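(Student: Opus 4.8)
The plan is to use the hint directly: with the input distribution held fixed (Gaussian, variance $\frac{P}{v_{i}}$ on the $v_{i}$ chosen sub-bands), $\mathscr{R}_{i}=\mathrm{I}(\vec{X}_{i}(s^{*}_{i});\vec{Y}_{i}(s^{*}_{i}))$ is a convex functional of the channel law $p_{\vec{Y}_{i}(s^{*}_{i})\mid\vec{X}_{i}(s^{*}_{i})}(\cdot\mid\cdot)$. First I would write the additive mixed-Gaussian channel $\vec{Y}_{i}(s^{*}_{i})=h_{i,i}\vec{X}_{i}(s^{*}_{i})+\vec{Z}_{i}$ as a mixture of additive \emph{Gaussian} channels, indexed by the interference configuration $c$, i.e. a joint realization of which of the other $N-1$ users lands on each of the $v_{i}$ sub-bands used by user $i$. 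Since distinct users transmit independently and each user's Gaussian symbols on distinct sub-bands are independent, conditioned on $c$ the vector $\vec{Z}_{i}$ is zero-mean Gaussian with a \emph{diagonal} covariance whose entry on sub-band $j$ equals $\sigma^{2}+c_{i,l}P=\sigma_{i,l}^{2}$ for the level $l$ active on that sub-band in configuration $c$. Writing $p_{\vec{Y}_{i}(s^{*}_{i})\mid\vec{X}_{i}(s^{*}_{i})}=\sum_{c}\Pr(c)\,p^{(c)}_{\vec{Y}_{i}(s^{*}_{i})\mid\vec{X}_{i}(s^{*}_{i})}$ and invoking convexity of mutual information in the channel gives
\begin{equation}
\mathscr{R}_{i}\leq\sum_{c}\Pr(c)\,\mathrm{I}^{(c)}(\vec{X}_{i}(s^{*}_{i});\vec{Y}_{i}(s^{*}_{i}))=:\mathscr{R}_{i,\mathrm{ub}}.
\end{equation}

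Next, for each fixed $c$ the channel is a bank of parallel independent scalar Gaussian channels with independent Gaussian inputs of variance $\frac{P}{v_{i}}$, so $\mathrm{I}^{(c)}$ factorizes exactly as $\sum_{j}\frac{1}{2}\log\!\big(1+\frac{|h_{i,i}|^{2}P/v_{i}}{\sigma_{i,l(j)}^{2}}\big)$, the sum running over the $v_{i}$ sub-bands of user $i$ (unused sub-bands carry no information). Interchanging this finite sum with the expectation over $c$, and using that the marginal probability that a given sub-band carries noise level $\sigma_{i,l}^{2}$ equals $a_{i,l}$ for every $j$ (the symmetry already recorded for $p_{Z_{i,j}}$), together with $\sigma_{i,l}^{2}=\sigma^{2}(1+c_{i,l}\gamma)$ and $\gamma=P/\sigma^{2}$, I obtain the closed form
\begin{equation}
\mathscr{R}_{i,\mathrm{ub}}=v_{i}\sum_{l=0}^{L_{i}}\frac{a_{i,l}}{2}\log\!\left(1+\frac{|h_{i,i}|^{2}\gamma/v_{i}}{1+c_{i,l}\gamma}\right).
\end{equation}
Here I would also note $a_{i,0}=\prod_{k\neq i}\big(1-\frac{v_{k}}{u}\big)$: a fixed sub-band of user $i$ is interference-free precisely when every other user $k$ misses it, which happens independently with probability $\binom{u-1}{v_{k}}/\binom{u}{v_{k}}=1-\frac{v_{k}}{u}$.

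Finally I would split off the $l=0$ summand. Since $c_{i,0}=0$ and $a_{i,0}=\prod_{k\neq i}(1-\frac{v_{k}}{u})$, it equals $\frac{1}{2}v_{i}\prod_{k\neq i}(1-\frac{v_{k}}{u})\log(1+|h_{i,i}|^{2}\gamma/v_{i})$, exactly the first term of (\ref{loopy}), and the remainder
\begin{equation}
\widetilde{\mathscr{R}}_{i,\mathrm{ub}}:=v_{i}\sum_{l=1}^{L_{i}}\frac{a_{i,l}}{2}\log\!\left(1+\frac{|h_{i,i}|^{2}\gamma/v_{i}}{1+c_{i,l}\gamma}\right)
\end{equation}
has, for each $l\geq1$ (where $c_{i,l}>0$), a summand that is increasing and bounded in $\gamma$ and tends to $\frac{1}{2}\log(1+\frac{|h_{i,i}|^{2}}{v_{i}c_{i,l}})$; hence $\lim_{\gamma\to\infty}\widetilde{\mathscr{R}}_{i,\mathrm{ub}}=v_{i}\sum_{l=1}^{L_{i}}\frac{a_{i,l}}{2}\log(1+\frac{|h_{i,i}|^{2}}{v_{i}c_{i,l}})<\infty$. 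Combined with $\log(1+|h_{i,i}|^{2}\gamma/v_{i})\sim\log\gamma$, this yields $\mathscr{R}_{i,\mathrm{ub}}\sim\frac{1}{2}v_{i}\prod_{k\neq i}(1-\frac{v_{k}}{u})\log\gamma$. I expect the main obstacle to be the bookkeeping in the mixture step: setting up the configuration space cleanly, justifying that the conditional noise covariance is diagonal so that $\mathrm{I}^{(c)}$ genuinely factorizes over sub-bands, and correctly matching the marginal configuration weights to $\{a_{i,l}\}$ — in particular pinning down $a_{i,0}=\prod_{k\neq i}(1-v_{k}/u)$. Once that is in place, the rest is convexity of mutual information plus elementary limits.
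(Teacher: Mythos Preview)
Your proposal is correct and follows essentially the same route as the paper: write the mixed-Gaussian channel as a convex mixture of Gaussian channels indexed by the interferers' sub-band choices, apply convexity of mutual information in the channel law, and then evaluate each Gaussian branch as a bank of parallel scalar channels. The only cosmetic difference is that you interchange the sum over configurations with the sum over sub-bands and collapse directly to the marginal weights $\{a_{i,l}\}$, yielding the closed form $\mathscr{R}_{i,\mathrm{ub}}=v_{i}\sum_{l}\tfrac{a_{i,l}}{2}\log\!\big(1+\tfrac{|h_{i,i}|^{2}\gamma/v_{i}}{1+c_{i,l}\gamma}\big)$, whereas the paper keeps the sum over the $M_{i}$ equiprobable configurations and extracts the leading coefficient via a separate counting lemma (Lemma~1) for the expected number of interference-free sub-bands; your marginalization is a slightly cleaner packaging of the same computation and your identification $a_{i,0}=\prod_{k\neq i}(1-v_{k}/u)$ is exactly the content of that lemma.
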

\begin{proof}
Let $\vec{W}_{i}$ be the $u\times 1$ interference vector where its
$j^{th}$ component $W_{i,j}$ is a random variable showing the
interference term on the $j^{th}$ frequency sub-band at the receiver
of the $i^{th}$ user. We have $W_{i,j}=\sum_{\substack{k=1\\k\neq
i}}^{N}\epsilon_{k,j}h_{k,i}X_{k,j}$. Clearly, $\vec{W}_{i}$ is a
mixed Gaussian random vector where the Gaussian components in its PDF represent
different choices the other users make in selecting their sub-bands.
In fact, we have
$p_{\vec{W}_{i}}(\vec{w})=\frac{1}{M_{i}}\sum_{m=1}^{M_{i}}g_{u}(\vec{w},D_{i,m})$,
where $M_{i}=\prod_{k\neq i}{u\choose v_{k}}$ and
$D_{i,m}=\mathrm{diag}(d_{i,m}^{(1)},\cdots,d_{i,m}^{(u)})$, in
which $d_{i,m}^{(j)} = \sum_{\substack{k=1\\k\neq i}}^{N}
\epsilon_{k,j}^2 |h_{k,i}|^{2} \frac{P}{v_k}$ denotes the variance
of $W_{i,j}$ for the $m^{th}$ realization
of $\{\epsilon_{k,j}\}_{k\neq i}$ out of $M_i$ possible realizations\footnote{Note that as each
user transmits independent Gaussian signals over its chosen
sub-bands, the matrices $\{D_{i,m}\}_{m=1}^{M_{i}}$ are diagonal.}. If the probability density
function of the interference vector consisted only of
$g_{u}(\vec{w},D_{i,m})$, the forward link of the $i^{th}$ user
would be converted into an additive Gaussian channel. The achievable
rate of such a virtual channel is given by
\begin{eqnarray}
\mathscr{R}_{i,m}&=&\frac{1}{2}\log\frac{\det\left(\mathrm{Cov}(\vec{X}_{i}(s^{*}_{i}))+D_{i,m}+\sigma^{2}I_{u}\right)}{\det\left(D_{i,m}+\sigma^{2}I_{u}\right)} \notag\\
&=& \frac{1}{2}\log\frac{\prod_{j=1}^{v_{i}}\left(\frac{\vert h_{i,i}\vert^{2}P}{v_{i}}+d_{i,m}^{(j)}+\sigma^{2}\right)}{\prod_{j=1}^{v_{i}}(d_{i,m}^{(j)}+\sigma^{2})} \notag\\&=&\frac{1}{2}\sum_{j=1}^{v_{i}}\log\bigg(1+\frac{\vert h_{i,i}\vert^{2}P}{v_{i}(d_{i,m}^{(j)}+\sigma^{2})}\bigg).
\end{eqnarray}
One may also state this as follows. Let $T_{i,m}\triangleq\left\{ j
: 1\leq j\leq v_{i}, d_{i,m}^{(j)}=0\right\}$. Then,
\begin{equation}
\mathscr{R}_{i,m}=\frac{\vert T_{i,m}\vert}{2}\log\bigg(1+\frac{\vert h_{i,i}\vert^{2}\gamma}{v_{i}}\bigg)+\widetilde{\mathscr{R}}_{i,m},
\end{equation}
where
\begin{equation}
\widetilde{\mathscr{R}}_{i,m}=\frac{1}{2}\sum_{1\leq j\leq v_{i}: d_{i,m}^{(j)}\neq 0}\log\bigg(1+\frac{\vert h_{i,i}\vert^{2}P}{v_{i}(d_{i,m}^{(j)}+\sigma^{2})}\bigg)
\end{equation}
and $\vert T_{i,m}\vert$ denotes the cardinality of the set
$T_{i,m}$. As each nonzero $d_{i,m}^{(j)}$ is proportional to $P$,
it is clear that
$\lim_{\gamma\rightarrow\infty}\widetilde{\mathscr{R}}_{i,m}<\infty$.
We know that $\mathscr{R}_{i}$ is convex in terms of
$p_{\vec{Y}_{i}(s_{i}^{*})\vert
\vec{X}_{i}(s_{i}^{*})}(\vec{y}\vert\vec{x})=p_{\vec{Z}_{i}}(\vec{y}-h_{i,i}\vec{x})$
\cite{53}. Noting this and the fact that
$p_{\vec{Z}_{i}}(\vec{z})=\frac{1}{M_{i}}\sum_{m=1}^{M_{i}}g_{u}(\vec{z},D_{i,m}+\sigma^{2}I_{u})$,
we have
\begin{equation}
\label{lal}
\mathscr{R}_{i}\leq \frac{1}{M_{i}}\sum_{m=1}^{M_{i}}\mathscr{R}_{i,m}=\left(\frac{1}{M_{i}}\sum_{m=1}^{M_{i}}\vert T_{i,m}\vert\right)\frac{1}{2}\log\bigg(1+\frac{\vert h_{i,i}\vert^{2}\gamma}{v_{i}}\bigg)+\widetilde{\mathscr{R}}_{i,\mathrm{ub}},
\end{equation}
where
$\widetilde{\mathscr{R}}_{i,\mathrm{ub}}=\frac{1}{M_{i}}\sum_{m=1}^{M_{i}}\widetilde{\mathscr{R}}_{i,m}$.
Clearly, as each $\widetilde{\mathscr{R}}_{i,m}$ saturates by
increasing $\gamma$, one has
$\lim_{\gamma\rightarrow\infty}\widetilde{\mathscr{R}}_{i,\mathrm{ub}}<\infty$.
The following Lemma offers an explicit expression for
$\frac{1}{M_{i}}\sum_{m=1}^{M_{i}}\vert T_{i,m}\vert$.
\begin{lem}
\begin{equation}\label{nhjbv}\frac{1}{M_{i}}\sum_{m=1}^{M_{i}}\vert T_{i,m}\vert=v_{i}\prod_{\substack{k=1\\k\neq i}}^{N}\left(1-\frac{v_{k}}{u}\right).\end{equation}
\end{lem}
\begin{proof}
Defining $A_{i,j}\triangleq\left\{ m: 1\leq m\leq M_{i}, \vert
T_{i,m}\vert=j\right\}$ for each $1\leq i\leq N$ and $1\leq j\leq
v_{i}$, one may express the left-hand side of (\ref{nhjbv}) as
\begin{equation}
\label{lki} \frac{1}{M_{i}}\sum_{m=1}^{M_{i}}\vert
T_{i,m}\vert=\frac{1}{M_{i}}\sum_{j=1}^{v_{i}}j\vert
A_{i,j}\vert.\end{equation}
Let $F_{i}$ be a random variable showing
the number of interference-free sub-bands among the $v_{i}$
sub-bands selected by the $i^{th}$ user. Using (\ref{lki}) and noting that
$\Pr\{F_{i}=j\}=\frac{\vert A_{i,j}\vert}{M_{i}}$, 
\begin{equation}\frac{1}{M_{i}}\sum_{m=1}^{M_{i}}\vert T_{i,m}\vert=\sum_{j=1}^{v_{i}}j\Pr\{F_{i}=j\}=\mathrm{E}\{ F_{i}\}.\end{equation}
Let us define
\begin{equation}F_{i,j}\triangleq\left\{ \begin{array}{cc}
1 &  W_{i,j}=0 \\
0 &  W_{i,j}\neq 0 \end{array}\right.\end{equation}
for any $1\leq i\leq N$ and $1\leq j\leq v_{i}$.
Obviously, $F_{i}=\sum_{j=1}^{v_{i}}F_{i,j}$. As such,
\begin{equation}\mathrm{E}\lbrace F_{i}\rbrace=\sum_{j=1}^{v_{i}}\mathrm{E}\{ F_{i,j}\}=\sum_{j=1}^{v_{i}}\Pr\{W_{i,j}=0\}.\end{equation}
Since $\Pr\{\epsilon_{k,j}=1\}=\frac{v_{k}}{u}$,
\begin{eqnarray}\Pr\{W_{i,j}=0\}=\Pr\{\textrm{$Z_{i,j}$ contains no interference}\}=\prod_{\substack{k=1\\k\neq i}}^{N}\Pr\{\epsilon_{k,j}=0\}=\prod_{\substack{k=1\\k\neq i}}^{N}\left(1-\frac{ v_{i}}{u}\right).
\end{eqnarray}  This yields
\begin{equation}\mathrm{E}\{ F_{i}\}=v_{i}\prod_{\substack{k=1\\k\neq i}}^{N}\left(1-\frac{v_{k}}{u}\right),\end{equation}
which completes the proof of Lemma 1.
\end{proof}
Based on (\ref{lal}) and Lemma 1, the proof of Theorem 1 is complete.
\end{proof}

\subsection{Lower Bound on the Achievable Rates}
In this section, we derive a lower bound on the achievable rates of
users. The idea behind deriving this lower bound is to invoke the
classical entropy power inequality (EPI). As we will see, this
initial lower bound is not in a closed form as it depends on the
differential entropy of a mixed Gaussian random variable. In
appendix A,  we obtain an appropriate upper bound on such an entropy
which leads to the final lower bound on $\mathscr{R}_{i}$.
\begin{thm}
There exists a lower bound $\mathscr{R}_{i,\mathrm{lb}}$ on the
achievable rate of the $i^{th}$ user which can be written as
\begin{equation}\mathscr{R}_{i,\mathrm{lb}} = \frac{1}{2}v_{i}\prod_{\substack{k=1\\k\neq i}}^{N}\left(1-\frac{v_{k}}{u}\right)\log\gamma + \widetilde{\mathscr{R}}_{i,\mathrm{lb}},\end{equation}
such that $\lim_{\gamma \to \infty}
\widetilde{\mathscr{R}}_{i,\mathrm{lb}} < \infty$.  In particular,
$\mathscr{R}_{i,\mathrm{lb}}\sim\frac{1}{2}v_{i}\prod_{\substack{k=1\\k\neq
i}}^{N}\left(1-\frac{v_{k}}{u}\right)\log\gamma$.
\end{thm}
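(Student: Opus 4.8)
The plan is to bypass the joint mixed-Gaussian structure of $\vec{Z}_{i}$ altogether by reducing $\mathscr{R}_{i}$ to a sum of scalar sub-channel mutual informations, and then to apply the classical scalar EPI together with a sharp upper bound on the differential entropy of a scalar mixed Gaussian. First, since the components $X_{i,1},\dots,X_{i,v_{i}}$ of $\vec{X}_{i}(s_{i}^{*})$ are mutually independent (and the remaining $u-v_{i}$ components are deterministically zero), the chain rule together with subadditivity of differential entropy and ``conditioning reduces entropy'' gives
\begin{equation}
\mathscr{R}_{i}=\mathrm{I}\big(\vec{X}_{i}(s_{i}^{*});\vec{Y}_{i}(s_{i}^{*})\big)\;\geq\;\sum_{j=1}^{v_{i}}\mathrm{I}(X_{i,j};Y_{i,j})\;=\;v_{i}\,\mathrm{I}(X_{i,1};Y_{i,1}),
\end{equation}
the last equality by the same symmetry used to obtain (\ref{lala}), since $p_{Z_{i,j}}(.)$ does not depend on $j$. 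It therefore suffices to lower bound the point-to-point mutual information of the scalar channel $Y_{i,1}=h_{i,i}X_{i,1}+Z_{i,1}$, where $X_{i,1}\sim\mathcal{N}(0,P/v_{i})$ is independent of $Z_{i,1}$, whose PDF is the mixed Gaussian (\ref{hf}).

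Next I would write $\mathrm{I}(X_{i,1};Y_{i,1})=\mathrm{h}(Y_{i,1})-\mathrm{h}(Z_{i,1})$ and invoke the classical EPI for the sum $Y_{i,1}=h_{i,i}X_{i,1}+Z_{i,1}$ of independent variables, which yields
\begin{equation}
2^{2\mathrm{h}(Y_{i,1})}\;\geq\;2^{2\mathrm{h}(h_{i,i}X_{i,1})}+2^{2\mathrm{h}(Z_{i,1})}\;\geq\;2^{2\mathrm{h}(h_{i,i}X_{i,1})}=\frac{2\pi e\,|h_{i,i}|^{2}P}{v_{i}},
\end{equation}
so that $\mathrm{I}(X_{i,1};Y_{i,1})\geq\tfrac{1}{2}\log\!\big(2\pi e\,|h_{i,i}|^{2}\gamma\sigma^{2}/v_{i}\big)-\mathrm{h}(Z_{i,1})$. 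This is the ``initial'' lower bound referred to in the statement: it is not in closed form because of the term $\mathrm{h}(Z_{i,1})$, the differential entropy of a mixed Gaussian.

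The decisive step — and the one I expect to be the main obstacle — is to bound $\mathrm{h}(Z_{i,1})$ from above tightly enough, which is exactly what Appendix~A must supply. The naive Gaussian-maximum bound $\mathrm{h}(Z_{i,1})\leq\tfrac{1}{2}\log\!\big(2\pi e\sum_{l}a_{i,l}\sigma_{i,l}^{2}\big)$ is fatal here: since $\sum_{l}a_{i,l}\sigma_{i,l}^{2}=\sigma^{2}+P\sum_{l}a_{i,l}c_{i,l}$ grows linearly in $\gamma$, it would leave only $\mathrm{I}(X_{i,1};Y_{i,1})=O(1)$. The fix is to exploit the discrete mixing structure of (\ref{hf}): conditioned on which of the $L_{i}+1$ Gaussian components is in force, $Z_{i,1}$ is Gaussian, while the component index carries at most $\log(L_{i}+1)$ bits, so
\begin{equation}
\mathrm{h}(Z_{i,1})\;\leq\;\sum_{l=0}^{L_{i}}a_{i,l}\,\tfrac{1}{2}\log\!\big(2\pi e\,\sigma_{i,l}^{2}\big)+\log(L_{i}+1).
\end{equation}
Substituting $\sigma_{i,l}^{2}=\sigma^{2}+c_{i,l}P$ with $c_{i,0}=0<c_{i,1}<\dots$, the $l\geq 1$ terms contribute $\big(\sum_{l\geq1}a_{i,l}\big)\tfrac{1}{2}\log\gamma+O(1)=(1-a_{i,0})\tfrac{1}{2}\log\gamma+O(1)$ while the $l=0$ term is $O(1)$.

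Finally I would combine the pieces. Recalling from the proof of Lemma~1 that $a_{i,0}=\Pr\{W_{i,1}=0\}=\prod_{k\neq i}(1-v_{k}/u)$, the $\tfrac{1}{2}\log\gamma$ coming from the EPI bound on $\mathrm{h}(Y_{i,1})$ and the $(1-a_{i,0})\tfrac{1}{2}\log\gamma$ coming from the bound on $\mathrm{h}(Z_{i,1})$ collapse to $\tfrac{a_{i,0}}{2}\log\gamma$, so $\mathrm{I}(X_{i,1};Y_{i,1})\geq\tfrac{1}{2}\prod_{k\neq i}(1-v_{k}/u)\log\gamma+O(1)$. Multiplying by $v_{i}$ and absorbing every $\gamma$-independent term into $\widetilde{\mathscr{R}}_{i,\mathrm{lb}}$ puts $\mathscr{R}_{i,\mathrm{lb}}$ in the asserted form, with $\widetilde{\mathscr{R}}_{i,\mathrm{lb}}$ bounded (indeed a constant), and boundedness of $\widetilde{\mathscr{R}}_{i,\mathrm{lb}}$ immediately gives $\mathscr{R}_{i,\mathrm{lb}}\sim\tfrac{1}{2}v_{i}\prod_{k\neq i}(1-v_{k}/u)\log\gamma$. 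The one point needing care is the very first inequality: the passage from the vector rate to the per-sub-band sum goes the right way precisely because the inputs $X_{i,j}$ are independent across $j$ — revealing the interference state by a genie, as in Theorem~1, would instead produce an \emph{upper} bound, so the two arguments are genuinely complementary.
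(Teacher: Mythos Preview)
Your proof is correct and arrives at the same asymptotic lower bound, but your decomposition differs from the paper's in a way worth noting. The paper keeps the $v_{i}$ sub-bands together as a vector, applies the \emph{vector} EPI to $\vec{Y}'_{i}=h_{i,i}\vec{X}'_{i}+\vec{Z}'_{i}$ to get the bound~(\ref{piu}), and only afterwards uses subadditivity $\mathrm{h}(\vec{Z}'_{i})\leq\sum_{j}\mathrm{h}(Z_{i,j})$ to reduce to the scalar entropy handled by Lemma~2. You instead split into scalar sub-channels \emph{first}, via the chain rule and the independence of $X_{i,j}$ from $(X_{i,1},\dots,X_{i,j-1},Z_{i,j})$, obtaining $\mathscr{R}_{i}\geq v_{i}\,\mathrm{I}(X_{i,1};Y_{i,1})$, and then apply only the scalar EPI. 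Both routes funnel into exactly the same scalar entropy bound on $\mathrm{h}(Z_{i,1})$; your version with $\log(L_{i}+1)$ is a slightly coarser form of the paper's Lemma~2, which uses the exact mixing entropy $\mathscr{H}_{i}=-\sum_{l}a_{i,l}\log a_{i,l}$, but this only affects the $O(1)$ constant and not the theorem. What the paper's route buys is a bound of the explicit form $\frac{v_{i}}{2}\log\big(\cdot+1\big)$ in~(\ref{uuku}), valid and nonnegative at every $\gamma$ (and later used verbatim in~(\ref{hoolooo}) as a rate-selection rule); your route, by dropping the $2^{2\mathrm{h}(Z_{i,1})}$ term in the EPI, is more elementary but yields a bound that can be negative at small $\gamma$ and is therefore purely asymptotic.
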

\begin{proof}
We define $\vec{X}'_{i}$ to be the $v_{i}\times 1$ signal vector
corresponding to the first $v_{i}$ elements of
$\vec{X}_{i}(s^{*}_{i})$. Clearly, $\vec{X}'_{i}$ is a Gaussian
vector with covariance matrix $\frac{P}{v_i} I_{v_i}$. Let
$\vec{Y}'_{i}=h_{i,i}\vec{X}'_{i}+\vec{Z}'_{i}$ where $\vec{Z}'_{i}$
is the noise plus interference vector at the receiver side of the
$i^{th}$ user on the first $v_{i}$ sub-bands. Using entropy power
inequality, we have
\begin{equation}
2^{\frac{2}{v_{i}}\mathrm{h}(\vec{Y}'_{i})}\geq 2^{\frac{2}{v_{i}}\mathrm{h}(h_{i,i}\vec{X}'_{i})}+2^{\frac{2}{v_{i}}\mathrm{h}(\vec{Z}'_{i})}.
\end{equation}
Dividing both sides by $2^{\frac{2}{v_i}\mathrm{h}(\vec{Z}'_{i})}$, we get
\begin{equation}
\label{piu}
\mathrm{h}(\vec{Y}'_{i})-\mathrm{h}(\vec{Z}'_{i})\geq \frac{v_{i}}{2}\log\bigg(2^{\frac{2}{v_{i}}\left(\mathrm{h}(h_{i,i}\vec{X}'_{i})-\mathrm{h}(\vec{Z}'_{i})\right)}+1\bigg).
\end{equation}
On the other hand, since $\vec{Y}'_{i}$ is a subvector of $\vec{Y}_{i}(s^{*}_{i})$, we have
\begin{equation}
\label{qiu}
\mathscr{R}_{i}=\mathrm{I}(\vec{X}_{i}(s^{*}_{i});\vec{Y}_{i}(s^{*}_{i}))\geq \mathrm{I}(\vec{X}'_{i};\vec{Y}'_{i})=\mathrm{h}(\vec{Y}'_{i})-\mathrm{h}(\vec{Z}'_{i}).
\end{equation}
Comparing (\ref{piu}) and (\ref{qiu}) yields
\begin{equation}
\label{uuu}
\mathscr{R}_{i}\geq \frac{v_{i}}{2}\log\bigg(2^{\frac{2}{v_{i}}\left(\mathrm{h}(h_{i,i}\vec{X}'_{i})-\mathrm{h}(\vec{Z}'_{i})\right)}+1\bigg).
\end{equation}
Clearly,
$\mathrm{h}(h_{i,i}\vec{X}'_{i})=\frac{v_{i}}{2}\log\left(2\pi
e\frac{\vert h_{i,i}\vert^{2} P}{v_{i}}\right)$. As $\vec{Z}'_{i}$
is a mixed Gaussian random vector, there is no closed-form formula
for $\mathrm{h}(\vec{Z}'_{i})$. Hence, we have to find an
appropriate upper bound on $\mathrm{h}(\vec{Z}'_{i})$ to further
simplify (\ref{uuu}). Using the chain rule for the differential
entropy, we obtain
\begin{equation}
\label{hjk}
\mathrm{h}(\vec{Z}'_{i})\leq\sum_{j=1}^{v_{i}}\mathrm{h}(Z_{i,j}).
\end{equation}
Recalling the definitions of $\{a_{i,l}\}_{l=0}^{L_{i}}$ and
$\{c_{i,l}\}_{l=0}^{L_{i}}$ in the system model, the following Lemma
yields an upper bound on $\mathrm{h}(Z_{i,j})$ for each $1\leq j\leq
v_{i}$.
\begin{lem}
For every $1\leq j\leq v_{i}$ and for all values of $\gamma$, there exists an upper bound on $\mathrm{h}(Z_{i,j})$ given by
\begin{equation}
\label{uu-u}
\mathrm{h}(Z_{i,j})\leq \frac{1-a_{i,0}}{2}\log(c_{i,L_{i}}\gamma+1)+\log(\sqrt{2\pi e}\sigma)+\mathscr{H}_{i}
\end{equation}
where $\mathscr{H}_{i}\triangleq-\sum_{l=0}^{L_{i}}a_{i,l}\log a_{i,l}$  is the discrete entropy of $\{a_{i,l}\}_{l=0}^{L_{i}}$.
\end{lem}
\begin{proof}
See Appendix A.
\end{proof}
By (\ref{uu-u}), (\ref{hjk}) and (\ref{uuu}),
\begin{eqnarray}
\label{uuku}
\mathscr{R}_{i}\geq \mathscr{R}_{i,\mathrm{lb}}&\triangleq&\frac{v_{i}}{2}\log\bigg(\frac{2^{-2\mathscr{H}_{i}}\vert h_{i,i}\vert^{2}\gamma}{v_{i}(c_{i,L_{i}}\gamma+1)^{1-a_{i,0}}}+1\bigg) \notag\\
&=& \frac{1}{2} v_{i} a_{i,0}\log \gamma + \frac{v_{i}}{2} \log \left( \frac{2^{-2\mathscr{H}_{i}}\vert h_{i,i}\vert^{2}}{v_{i}(c_{i,L_{i}}+\gamma^{-1})^{1-a_{i,0}}}+\gamma^{-a_{i,0}}\right).
\end{eqnarray}
Defining $\widetilde{\mathscr{R}}_{i,\mathrm{lb}} \triangleq
\frac{v_{i}}{2} \log \left( \frac{2^{-2\mathscr{H}_{i}}\vert
h_{i,i}\vert^{2}}{v_{i}(c_{i,L_{i}}+\gamma^{-1})^{1-a_{i,0}}}+\gamma^{-a_{i,0}}\right)$,
we note that $\lim_{\gamma \to \infty}
\widetilde{\mathscr{R}}_{i,\mathrm{lb}} < \infty $. Combining this
with the fact that $a_{i,0} = \prod_{\substack{k=1\\k\neq
i}}^{N}\left(1-\frac{v_{k}}{u}\right)$ completes the proof of
Theorem 2.
\end{proof}
In \cite{kami-1}, we address another approach to propose a lower
bound on the achievable rate of the $i^{th}$ user with the same SNR
scaling as $\mathscr{R}_{i,\mathrm{lb}}$.

One may consider the following generalization of the FH scheme.
Let us assume that the users are not restricted to choose a fixed
number of frequency sub-bands in each transmission slot. In fact, in
each transmission slot the number of selected sub-bands can be any
integer between $0$ and $u$, and the probability of choosing $v\in[0,u]\cap\mathbb{Z}$ sub-bands by the $i^{th}$ user is denoted by $\mu_{i,v}$.
Therefore, the $i^{th}$ user has two random generators. The first
random generator selects a number $0\leq v\leq u$ according to the
probability mass function $\{\mu_{i,v}\}_{v=0}^{u}$, while the other
generator selects $v$ sub-bands among the whole available $u$
sub-bands. This repeats independently from transmission slot to
transmission slot. Based on the arguments made in section II, the
achievable rate of the $i^{th}$ user can be written as
\begin{equation}
\label{hfd}
\mathscr{R}_{i}=\sum_{v=0}^{u}\mu_{i,v}\mathrm{I}(\vec{X}_{i}(s^{*}_{i,v});\vec{Y}_{i}(s^{*}_{i,v})),
\end{equation}
where $s^{*}_{i,v}$ denotes the state where the $i^{th}$ user
selects the first $v$ sub-bands. Clearly,
$\mathrm{I}(\vec{X}_{i}(s^{*}_{i,0});\vec{Y}_{i}(s^{*}_{i,0}))=0$
for any $1\leq i\leq N$. Furthermore, 
 \begin{eqnarray} \label{27}
 a_{i,0} &=& \Pr\left\{\textrm{A given component of $\vec{Z}_{i}$ contains no interference}\right\} \notag\\
 &=&\sum_{v_{1}=0}^{u}\cdots \sum_{v_{i-1}=0}^{u} \sum_{v_{i+1}=0}^{u} \cdots \sum_{v_{N}=0}^{u}\prod_{\substack{k=1\\k\neq i}}^{N}\mu_{k,v_{k}} \left(1-\frac{v_{k}}{u} \right) \notag\\
&=& \prod_{\substack {k=1\\k \neq i}}^{N}\sum_{v_{k}=0}^{u}\mu_{k,v_{k}} \left(1-\frac{v_{k}}{u} \right) \notag\\
 &=& \prod_{\substack {k=1\\k \neq i}}^{N}\sum_{v=0}^{u}\mu_{k,v} \left(1-\frac{v}{u}\right) \notag\\
&=& \prod_{\substack {k=1\\k \neq i}}^{N} \left(1-\frac{\bar{v}_k}{u}\right)
   \end{eqnarray}
where $\bar{v}_k\triangleq\sum_{v=0}^{u}v\mu_{k,v}$. Based on the results of
this section, we get
 \begin{equation}
 \label{poih}
 \mathrm{I}(\vec{X}_{i}(s^{*}_{i,v});\vec{Y}_{i}(s^{*}_{i,v}))\sim \frac{1}{2}va_{i,0}\log\gamma.
  \end{equation}
Using (\ref{hfd}), (\ref{27}) and (\ref{poih}) yields
\begin{eqnarray}
\label{mach}
 \mathscr{R}_i &\sim& \sum_{v=0}^{u} \frac{1}{2}\mu_{i,v} v \prod_{\substack {k=1\\k \neq i}}^{N}\left(1-\frac{\bar{v}_k}{u}\right) \log \gamma\notag\\
 &=& \frac{1}{2} \bar{v}_i \prod_{\substack {k=1\\k \neq i}}^{N} \left(1-\frac{\bar{v}_k}{u}\right) \log \gamma.
\end{eqnarray}
In fact, (\ref{mach}) demonstrates that the generalized FH scheme is equivalent to
the FH scheme through substituting $\{v_i\}_{i=1}^{N} $ by
$\{\bar{v}_i\}_{i=1}^N$. However, it is remarkable that in contrast
to the FH scheme in which $\{v_i\}_{i=1}^{N} $ are integer values,
in the generalized FH scheme  $\{\bar{v}_i\}_{i=1}^N$ are real
values. This provides more flexibility in system design. The above
observation motivates us to use the generalized scenario in the
sequel and we simply refer to it as the FH scheme. In this scheme,
the $i^{th}$ user has a parameter $\bar{v}_i$, which can be chosen to
be any real number in the interval $[0,u]$.

\section{System Design}
In this section, we find the optimum operation point  for the FH scheme. This requires finding
the optimum values of $\{\bar{v}_i\}_{i=1}^N$.  Based on the results
established in the previous section, there exist upper and lower
bounds on the achievable rate of each user that coincide in the
high SNR regime. As such, the achievable rate itself must be
asymptotically equivalent to each of these bounds, i.e.,
\begin{equation}
\label{uu}
\mathscr{R}_{i}\sim \frac{1}{2} \bar{v}_{i}\prod_{\substack{k=1\\k\neq i}}^{N}\left(1-\frac{\bar{v}_{k}}{u}\right)\log\gamma,
\end{equation}
where, based on the conclusion made at the end of section III, the
parameters $\{\bar{v}_{i}\}_{i=1}^{N}$ can be adjusted to be any
real number in the range $[0,u]$. By (\ref{uu}), the network sum-rate can be asymptotically written as
\begin{equation}
\sum_{i=1}^{N}\mathscr{R}_{i}\sim \mathsf{SMG}\left(\bar{v}_{1},\cdots,\bar{v}_{N}\right)\log\gamma,
\end{equation}
where
\begin{equation}
\mathsf{SMG}\left(\bar{v}_{1},\cdots,\bar{v}_{N}\right)\triangleq\sum_{i=1}^{N}
\frac{1}{2} \bar{v}_{i}\prod_{\substack{k=1\\k\neq
i}}^{N}\left(1-\frac{\bar{v}_{k}}{u}\right).\end{equation} We call
$\mathsf{SMG}\left(\bar{v}_{1},\cdots,\bar{v}_{N}\right)$
the \textit{sum multiplexing gain} of the system.
$\mathsf{SMG}\left(\bar{v}_{1},\cdots,\bar{v}_{N}\right)$
is a symmetric function of $(\bar{v}_{1},\cdots,\bar{v}_{N})$ and
has a saddle point at $\bar{v}_{i}=\frac{u}{N}$ for  $1\leq i\leq
N$. In a \emph{fair} FH system, it is required that $\bar{v}_{i}=v$ for
all $1\leq i\leq N$ where $v$ is any real number in the interval
$[0,u]$. Hence, we define
\begin{equation}
\label{yu}
\mathsf{SMG}(v,N)\triangleq\mathsf{SMG}\left(\bar{v}_{1},\cdots,\bar{v}_{N}\right)\Big|_{\forall i:\bar{v}_{i}=v}=\frac{N}{2} v\left(1-\frac{v}{u}\right)^{N-1}.\end{equation}
Maximizing this in terms of $v$ yields
\begin{equation}
\label{yuu}
v_{\mathrm{opt}}=\frac{u}{N}.
\end{equation}
Computation of $v_{\mathrm{opt}}$ requires that all transmitters
know the number of active users $N$ in the network. As far as all
channel gains are realizations of independent and continuous random
variables, the number of power levels in the PDF of noise plus
interference on any frequency sub-band at the receiver side of any
user is almost surely equal to $2^{N-1}$. Therefore, any receiver
can identify $N$ and send it to its corresponding transmitter
through a feedback link.

Setting $v=v_{\mathrm{opt}}$, the highest sum multiplexing gain
of the fair  FH scheme is given by
 \begin{equation}
 \label{bxxxx}
\sup_{v}\, \, \mathsf{SMG}(v,N)=\frac{1}{2}
u\left(1-\frac{1}{N}\right)^{N-1}.\end{equation} It is remarkable
that $\frac{u}{N}$ may not be a positive
integer. If we do not adopt the generalized FH scheme, then all
users must hop randomly over  sets of $\tilde{v} =
\max\left\{\lfloor\frac{u}{N}\rfloor,1 \right\}$ frequency
sub-bands. This results in a sum multiplexing gain of
$\frac{N}{2} \tilde{v} \left(1-\frac{\tilde{v}}{u}\right)^{N-1}$.
This is, in general, less than
$\frac{1}{2}u\left(1-\frac{1}{N}\right)^{N-1}$. By adopting the
generalized FH scheme in case $\frac{u}{N}\notin\mathbb{Z}$, each
user only needs to hop randomly over different sets of frequency
sub-bands of cardinality $\lfloor\frac{u}{N}\rfloor$ or
$\lceil\frac{u}{N}\rceil$. In fact, each user has two random
generators. The first random generator selects one of the numbers
$\lfloor\frac{u}{N}\rfloor$ and $\lceil\frac{u}{N}\rceil$ with
probabilities $\mu$ and $1-\mu$, respectively, such that
$\mu\lfloor\frac{u}{N}\rfloor+(1-\mu)\lceil\frac{u}{N}\rceil=\frac{u}{N}$
or equivalently $\mu=\lceil\frac{u}{N}\rceil-\frac{u}{N}$. Let us
assume the first random generator has selected a number
$a\in\{\lfloor\frac{u}{N}\rfloor,\lceil\frac{u}{N}\rceil\}$. Then,
the second random generator selects a subset of cardinality
$a$ among the $u$ frequency sub-bands. Doing this independently from
transmission slot to transmission slot, the sum multiplexing gain given
in (\ref{bxxxx}) is achieved.

\textit{Observation 1}- One might suggest another well-known utility
function that is popular in the game theory context, namely the
\textit{proportional fair function}, which is defined as $\sum_{i=1}^{N}\log
\mathscr{R}_{i}$ . We have
\begin{eqnarray}
\sum_{i=1}^{n}\log \mathscr{R}_{i} &\sim& \sum_{i=1}^{N}\log\left( \frac{1}{2} \bar{v}_{i}\prod_{\substack{k=1\\k\neq i}}^{N} \left(1-\frac{\bar{v}_{k}}{u} \right)\log \gamma\right) \notag\\
&=& \sum_{i=1}^{N}\log\left(\frac{1}{2} \bar{v}_{i}\prod_{\substack{k=1\\k\neq i}}^{N}\left(1-\frac{\bar{v}_{k}}{u}\right)\right)+N\log\log\gamma .
\end{eqnarray}
It can be easily verified that $\sum_{i=1}^{N}\log\left(\frac{1}{2}
\bar{v}_{i}\prod_{\substack{k=1\\k\neq
i}}^{N}\left(1-\frac{\bar{v}_{k}}{u}\right)\right)$ has an absolute
maximum at $\bar{v}_{i}=\frac{u}{N}$ for $1\leq i\leq N$.

\textit{Observation 2}- As we will discuss in more detail in the
next section, the number of active users in the system is in general
a random variable. Although users in the FH system can use their
knowledge about the number of active users to
adjust the hopping parameter (as explained earlier in this section),  one may devise a sub-optimal rule to
fix $v=v^{*}$ given by
\begin{equation}
\label{pof}
v^{*}=\arg\,\,\, \max_{v\in[0,u]}\mathrm{E}\left\{\mathsf{SMG}(v,N)\right\},
\end{equation}
where the expectation is with respect to the number of active users
in the network. This selection of $v$ by all users makes the system robust
against changes in the number of active users in the
network\footnote{In fact, the transmitters use their knowledge about
the instantaneous number of active users only to regulate their
transmission rate. This is explained in more details in
(\ref{hoolooo}).}. We call this version of the FH system the robust Frequency Hopping. In fact, in the robust FH scenario, there exists a \emph{global hopping parameter} $v$ where all users hop over a number $v$ of the $u$ frequency sub-bands. We remark that the rule in (\ref{pof}) is a particular design approach for the robust FH system.  In the next section, we consider another design rule based on maximizing the average of the minimum multiplexing gain per user in term of the number of active users in the network. 


  \section{Comparison of the robust FH scenario with the FD scheme}

In a centralized setup, under the condition that no user is aware of
the other users' codebooks and the number of users is fixed and
known to the central controller, it is shown in \cite{1} that if the crossover channel gains are sufficiently larger than the forward channel gains, then every Pareto optimal rate vector is realized by Frequency Division for all ranges of SNR. However,
in realistic scenarios, the number of active users is not fixed.
This degrades the performance of the FD scheme as it is designed for
a specified number of users. In particular, if the number of active users
is less than the designed target of the FD scheme, a considerable portion
of the spectrum may remain unused. This encourages us to compare the performance
of the proposed robust FH scheme with that of the FD scheme in a setup where
the number of active users is a random variable with a given distribution\footnote{Note that, as explicitly mentioned in the system model, the number of users is assumed to be
fixed for the whole transmission period of interest.}.

To perform the comparison, we introduce four different performance
measures. In the following definitions, the $\sup$ operation is over
possible adjustable parameters in the system, e.g., the hopping
parameter in the FH scenario.  All expectations are taken with respect to
$N$. We define $q_{n}\triangleq\Pr\{N=n\}$ for all $n\geq 0$.  It is
assumed that the maximum number of active users in the network is
$n_{\max}$, i.e.,  $\Pr\{N>n_{\max}\}=0$. We usually take
$q_{0}=0$ unless otherwise stated.
\begin{itemize}
 \item \textit{Average sum multiplexing gain}, which is defined as
\begin{eqnarray}
 \eta^{(1)} \triangleq \sup\lim_{\gamma \to \infty} \frac{\mathrm{E} \left\{\sum_{i=1}^N \mathscr{R}_i \right\}}{\log \gamma} = \sup\mathrm{E} \left\{\mathsf{SMG}\right\},
\end{eqnarray}
 where $\mathsf{SMG}=\lim_{\gamma\to\infty}\frac{\sum_{i=1}^{N}\mathscr{R}_{i}}{\log\gamma}$ is the sum multiplexing gain.
\item \textit{Average minimum multiplexing gain per user}, which is defined as
\begin{eqnarray}
 \eta^{(2)} \triangleq\sup \lim_{\gamma \to \infty} \frac{\mathrm{E} \left\{\min_{1\leq i\leq N}\mathscr{R}_{i} \right\}}{\log \gamma}.
\end{eqnarray}
\item \textit{Minimum nonzero multiplexing gain per user}, which is defined as
\begin{eqnarray}
 \eta^{(3)} \triangleq\min_{n: q_n \neq 0} \quad \min_{\substack {N_{\mathrm{serv}}=n\\1 \leq i \leq n}} \quad \lim_{\gamma \to \infty}  \frac{\mathscr{R}_i}{\log \gamma}
\end{eqnarray}
 where $N_{\rm{serv}}$ denotes the number of active users receiving service (i.e., their multiplexing gain is strictly positive).\item \textit{Service capability}, which is defined as
\begin{eqnarray} \label{eta4}
 \eta^{(4)} \triangleq \sup\mathrm{E} \left\{ \frac{N_{\rm{serv}}}{N}\right\}.
\end{eqnarray}

\end{itemize}

The FD system is designed to service, at most, a certain number of
active users. We denote this design target in the FD scheme by
$n_{\mathrm{des}}$. Therefore, the spectrum is divided to
$n_{\mathrm{des}}$ \emph{bands} where each band contains
$\frac{u}{n_{\mathrm{des}}}$ frequency sub-bands. This requires that
$u$ is divisible by $n_{\mathrm{des}}$, which is assumed to be the
case to guarantee fairness. Each user that becomes active occupies
an empty band. If there is no empty band, no service is available.
In case $n_{\max}$ is finite, the central controller in the FD
system sets $n_{\mathrm{des}}=n_{\max}$ to ensure that all users can
receive service upon activation. In case $n_{\max}$ is not a finite
number, the central controller sets $n_{\mathrm{des}}=u$ to guarantee
that as many users receive service as possible. Therefore, $n_{\mathrm{des}}=\min\{n_{\max},u\}$. In fact, we will show that selecting $n_{\mathrm{des}}=\min\{n_{\max},u\}$ maximizes the service capability in the FD system.

We remark that due to the nature of the robust FH scheme, as far as users  hop over a proper subset of size $v$ of the $u$ sub-bands, all
users receive service, while if $v=u$ and $N>1$, no user receives
service, i.e.,  the multiplexing gain achieved by any active user is
zero. As such, to get the largest service capability in the FH scenario, we require $v\in(0,u)$. As an example, if $v^{*}$ in (\ref{pof}) is equal to $u$, the service capability will be less than $1$. To avoid this, we set the global hopping parameter $v=v^{*}-\varepsilon=u-\varepsilon$ for sufficiently small $\varepsilon$ such that the performance of the robust FH is still above the performance of the FD scenario.   

$\bullet$ \textit{ Average sum multiplexing gain}

This measure
is a meaningful tool of comparison if $n_{\max}<\infty$.
Hence, we assume $n_{\max}$ is a finite number and $u$ is a multiple
of $n_{\max}$ in this subsection. It is easily seen that the
sum multiplexing gain in the FD scenario is
 \begin{equation}
 \label{pooohj}
 \mathsf{SMG}_{\mathrm{FD}}(n_{\mathrm{des}},N)=\left\{\begin{array}{cc }
    \frac{N}{2} \frac{u}{n_{\mathrm{des}}}  & N\leq n_{\mathrm{des}}   \\
       \frac{u}{2}  &   N>n_{\mathrm{des}}
\end{array}\right..
 \end{equation}
 Noting (\ref{yu}),  $\mathsf{SMG}_{\mathrm{FH}}(v,N)$
is given by
 \begin{equation}
 \label{llll}
 \mathsf{SMG}_{\mathrm{FH}}(v,N)=\frac{1}{2}Nv\left(1-\frac{v}{u}\right)^{N-1}. \end{equation}
Since the number of active users $N$ is a global knowledge, all
users can choose $v=v_{\mathrm{opt}}=\frac{u}{N}$ to achieve the
maximum sum multiplexing gain.  However, as mentioned earlier, a robust
hopping strategy against changes in the number of active users is
the one given in (\ref{pof}).
It is notable that although the value of $v$ is fixed at $v^{*}$,
all users regulate their rates based on the instantaneous number of
active users to avoid transmission failure. Using the lower bound on
the achievable rate of the $i^{th}$ user given in (\ref{uuku}), the
$i^{th}$ user selects its actual rate $R_{i}$ as
\begin{equation}
\label{hoolooo}
R_{i}=\frac{v^{*}}{2}\log\left(\frac{\left(\frac{v^{*}}{u}\right)^{-\frac{2(N-1)v^{*}}{u}}\left(1-\frac{v^{*}}{u}\right)^{-2(N-1)\left(1-\frac{v^{*}}{u}\right)}\vert
h_{i,i}\vert^{2}\gamma}{v^{*}\left(1+\frac{\sum_{j\neq
i}|h_{j,i}|^{2}\gamma}{v^{*}}\right)^{1-\left(1-\frac{v^{*}}{u}\right)^{N-1}}}+1\right).
\end{equation}
It is seen that the quantities the $i^{th}$ transmitter needs to
evaluate $R_{i}$ are $|h_{i,i}|$, $\sum_{j\neq i}|h_{j,i}|^{2}$ and
$N$. The $i^{th}$ receiver sends these required data to the
transmitter via a feedback link.

We present an example to compare the performance of FH with that of FD in terms of
$\eta^{(1)}$.

 \textit{Example 1}-
Let us consider a network where $n_{\max}=2$. The central controller
in the FD system sets $n_{\mathrm{des}}=2$, and according to
(\ref{pooohj}),
$ \eta_{\mathrm{FD}}^{(1)}=\mathrm{E}\{\mathsf{SMG}_{\mathrm{FD}}(2,N)\}=q_{1}\frac{u}{4}+q_{2}\frac{u}{2}=\frac{q_{1}+2q_{2}}{4}u$. Based on (\ref{llll}),
 $\mathrm{E}\{\mathsf{SMG}_{\mathrm{FH}}(v,N)\}=\frac{1}{2}q_{1}v+q_2v\left(1-\frac{v}{u}\right)$.  
Using this in (\ref{pof}),
\begin{equation}
v^{*}=\arg\max_{v\in[0,u]}\mathrm{E}\{\mathsf{SMG}_{\mathrm{FH}}(v,N)\}=\left\{\begin{array}{cc}
   \frac{q_{1}+2q_{2}}{4q_{2}}u   & q_{1}\leq 2q_{2}   \\
    u  &  q_{1}>2q_{2}
\end{array}\right..
\end{equation}
Therefore,
\begin{eqnarray}
\eta_{\mathrm{FH}}^{(1)}=\sup_{v\in[0,u]}\mathrm{E}\{\mathsf{SMG}_{\mathrm{FH}}(v,N)\}=\mathrm{E}\{\mathsf{SMG}_{\mathrm{FH}}(v^{*},N)\}
=\left\{\begin{array}{cc}
    \frac{(q_{1}+2q_{2})^{2}}{16q_{2}}u  &  q_{1}\leq 2q_{2}  \\
    \frac{q_{1}}{2}u  & q_{1}>2q_{2}
\end{array}\right..
\end{eqnarray}
It is easy to see that
$\eta_{\mathrm{FH}}^{(1)}>\eta_{\mathrm{FD}}^{(1)}$ if and only if
$q_{1}>2q_{2}$, or equivalently, $q_{1}>\frac{2}{3}$. We note that
in this case $v^{*}=u$, i.e., all users spread their power on the
whole spectrum and no hopping is performed. This makes service capability be strictly less than $1$ because, if both users are active, non of them receive service. As such, we take $v=u-\varepsilon$. To ensure that the performance of the robust FH scenario is above that of the FD system, we require
\begin{equation}
\label{lki}
\frac{1}{2}q_{1}(u-\varepsilon)+q_2(u-\varepsilon)\left(1-\frac{u-\varepsilon}{u}\right)>\frac{q_{1}+2q_{2}}{4}u.
\end{equation}
As far as $\varepsilon<\frac{u}{2}$, (\ref{lki}) is equivalent to
$q_{1}>2q_{2}\frac{1-\frac{2\varepsilon}{u}\left(1-\frac{\varepsilon}{u}\right)}{1-\frac{2\varepsilon}{u}}$. This is a more restrictive condition than $q_{1} > 2q_{2}$ which is the cost paid for having full service capability. However, for $\varepsilon \ll u$ the two regions of $(q_{1},q_{2})$ are almost the same.
 $\square$

In \cite{kami-1}, it is shown that in case $n_{\max}=3$,
there exists a probability set of $(q_{1},q_{2},q_{3})$  on the number of active users that makes  FH achieve a
higher performance compared to FD in terms of $\eta^{(1)}$ while $v^{*}$ is
strictly less than $u$.

$\bullet$ \textit{Average minimum multiplexing gain per user}

This measure  can also be written as
  \begin{equation}
  \eta^{(2)}=\sup\mathrm{E} \left \lbrace \frac{\mathsf{SMG}}{N}\mathbb{1}(N_{\mathrm{serv}}=N)\right \rbrace.
  \end{equation}
In fact, if $N_{\mathrm{serv}} \neq N$, there exists at least one
user that achieves no multiplexing gain. Therefore, the minimum
multiplexing gain per user is zero in this case. However, if
$N_{\mathrm{serv}}=N$, all users achieve a nonzero multiplexing
gain. This measure can be used whether $n_{\max}$ is finite or
infinite.

In case of the FH scenario, the rule to choose the optimum value of the global hopping parameter
$v$, denoted by $v^{\dagger}$, is given by
 \begin{equation}
 \label{pof2}
v^{\dagger}=\arg\,\,\, \max_{v\in[0,u]}\mathrm{E}\left \lbrace \frac{\mathsf{SMG}_{\mathrm{FH}}(v,N)}{N}\mathbb{1}(N=N_{\mathrm{serv}})\right \rbrace.
\end{equation}
In this case, the actual transmission rate of the $i^{th}$ user is given by (\ref{hoolooo}) where $v^{*}$ is replaced by $v^{\dagger}$.

\textit{Example 2}- Considering the same setup in example 1, as
$n_{\max}<\infty$, we have $N_{\mathrm{serv,FD}}=N$. Hence, we have
$\eta_{\mathrm{FD}}^{(2)}=\frac{1}{2}\frac{u}{2}q_1 +
\frac{1}{2}\frac{u}{2} q_{2}=\frac{u}{4}$. In case of the FH scheme,
\begin{equation}
\label{ }
\mathbb{1}(N_{\mathrm{serv,FH}}=N)=\left\{\begin{array}{cc}
    1  &  \textrm{$N=1$ or ($N>1$ and $v\neq u$) }  \\
    0  &   \mathrm{oth.}
\end{array}\right..
\end{equation}

Hence,
\begin{eqnarray}
\mathrm{E}\left\{\frac{\mathsf{SMG}_{\mathrm{FH}}(v,N)}{N}\mathbb{1}(N_{\mathrm{serv}}=N)\right\}&=& \mathrm{E}\left\{\frac{\mathsf{SMG}_{\mathrm{FH}}(v,N)}{N}\mathbb{1}(N_{\mathrm{serv}}=N)\Big| N=1\right\}\Pr\{N=1\}\notag\\&&+\mathrm{E}\left\{\frac{\mathsf{SMG}_{\mathrm{FH}}(v,N)}{N}\mathbb{1}(N_{\mathrm{serv}}=N) \Big|N=2\right\}\Pr\{N=2\}\notag \\
&=& \frac{1}{2}q_{1}v+\frac{1}{2}q_{2}v\left(1-\frac{v}{u}\right)\mathbb{1}(v\neq u)\notag\\
&=& \frac{1}{2}q_{1}v+\frac{1}{2}q_{2}v\left(1-\frac{v}{u}\right).
\end{eqnarray}
Hence,
\begin{equation}
\label{ }
v^{\dagger}=\arg\max_{v\in(0,u]}\left\{q_{1}v+q_{2}v\left(1-\frac{v}{u}\right)\right\},\end{equation}
which yields
\begin{equation}
\label{ }
v^{\dagger}=\left\{\begin{array}{cc}
    \frac{u}{2q_{2}}  &  2q_{2}>1  \\
    u  &   2q_{2}\leq 1
\end{array}\right..
\end{equation}

As such,
\begin{eqnarray}
\eta_{\mathrm{FH}}^{(2)}=\left\{\begin{array}{cc}
   \frac{1}{8q_{2}}u &  2q_{2}>1  \\
    \frac{1}{2}q_{1}u  & 2q_{2}\leq 1
\end{array}\right..
\end{eqnarray}
It is easy to see that $\eta_{\mathrm{FH}}^{(2)}>\eta_{\mathrm{FD}}^{(2)}$
if and only if $2q_{2}<1$, or equivalently $q_{1}>\frac{1}{2}$. However, in this
case $v^{\dagger}=u$. Hence, to make the service capability be $1$, we choose the global hopping parameter $v=u-\varepsilon$. To ensure that FH still outperforms FD in terms of the average minimum multiplexing gain per user, we require, 
\begin{equation}
\label{ }
 \frac{1}{2}q_{1}(u-\varepsilon)+\frac{1}{2}q_{2}(u-\varepsilon)\left(1-\frac{u-\varepsilon}{u}\right)>\frac{u}{4}.
 \end{equation}
 This is equivalent to $2q_{2}<\frac{2}{1-\frac{\varepsilon}{u}}\left(1-\frac{1}{2\left(1-\frac{\varepsilon}{u}\right)}\right)$. 
 $\square$
 
In the next example, we provide a case where
$\eta_{\mathrm{FH}}^{(2)}>\eta_{\mathrm{FD}}^{(2)}$
while $v^{\dagger}$ is strictly less than $u$.

\textit{Example 3}- Let $n_{\max}<\infty$. In this example, we aim to derive a
sufficient condition on $\{q_{n}\}_{n=1}^{n_{\max}}$ such that
$\eta_{\mathrm{FH}}^{(1)}> \eta_{\mathrm{FD}}^{(1)}$ or $\eta_{\mathrm{FH}}^{(2)}> \eta_{\mathrm{FD}}^{(2)}$.

\textit{Case 1-} Let us consider the measure $\eta^{(1)}$.  We have the following result.
\begin{proposition}
As far as
\begin{equation}\label{goolj}\mathrm{E}\{N\}< \frac{1}{2}\ln\left((e^{2}-1)n_{\max}\right),\end{equation} we have $\eta_{\mathrm{FD}}^{(1)}<\eta_{\mathrm{FH}}^{(1)}$.
\end{proposition}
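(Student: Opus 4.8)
The plan is to lower‑bound $\eta_{\mathrm{FH}}^{(1)}$ by committing to one convenient value of the global hopping parameter, to evaluate $\eta_{\mathrm{FD}}^{(1)}$ exactly, and then to read off the condition on $\mathrm{E}\{N\}$ under which the former strictly exceeds the latter.

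First I would pin down the two sides. Since $n_{\mathrm{des}}=n_{\max}$ and $N\le n_{\max}$ with probability one, (\ref{pooohj}) reduces to $\mathsf{SMG}_{\mathrm{FD}}(n_{\max},N)=\frac{u}{2n_{\max}}N$, so $\eta_{\mathrm{FD}}^{(1)}=\frac{u}{2n_{\max}}\mathrm{E}\{N\}$. For FH, $\eta_{\mathrm{FH}}^{(1)}=\sup_{v\in[0,u]}\mathrm{E}\{\mathsf{SMG}_{\mathrm{FH}}(v,N)\}$ with $\mathsf{SMG}_{\mathrm{FH}}(v,N)=\frac12 Nv(1-\frac vu)^{N-1}$ by (\ref{llll}); hence every admissible $v$ yields a lower bound.

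The key step is the choice $v=u(1-e^{-2})\in(0,u)$, which makes $(1-\frac vu)^{N-1}=e^{-2(N-1)}$ and turns the summand into $\frac u2(e^2-1)Ne^{-2N}$, so that $\eta_{\mathrm{FH}}^{(1)}\ge \frac u2(e^2-1)\,\mathrm{E}\{Ne^{-2N}\}$. I would then invoke convexity for the remaining expectation: writing $\psi(x)=xe^{-2x}$, one computes $\psi''(x)=4e^{-2x}(x-1)$, which is nonnegative exactly for $x\ge1$; since $q_0=0$ we have $N\ge 1$ (and therefore $\mathrm{E}\{N\}\ge1$) almost surely, so Jensen's inequality on $[1,\infty)$ gives $\mathrm{E}\{Ne^{-2N}\}\ge \mathrm{E}\{N\}\,e^{-2\mathrm{E}\{N\}}$. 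Combining the last two displays yields the clean bound $\eta_{\mathrm{FH}}^{(1)}\ge \frac u2(e^2-1)\,\mathrm{E}\{N\}\,e^{-2\mathrm{E}\{N\}}$.

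Finally I would compare the two: $\eta_{\mathrm{FH}}^{(1)}>\eta_{\mathrm{FD}}^{(1)}$ holds as soon as $(e^2-1)e^{-2\mathrm{E}\{N\}}>1/n_{\max}$, and taking logarithms this rearranges precisely to $\mathrm{E}\{N\}<\frac12\ln((e^2-1)n_{\max})$, the stated hypothesis. The one place demanding care — and the source of the constant $2$ — is the convexity/Jensen step: $c=2$ is the smallest constant for which $xe^{-cx}$ is convex over the entire support $[1,\infty)$ of $N$, whereas a smaller $c$ would force the requirement $\mathrm{E}\{N\}\ge 2/c>1$, which is not available in general. Everything else is elementary algebra.
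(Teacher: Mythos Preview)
Your proof is correct and follows essentially the same approach as the paper's: both hinge on the convexity of $N\mapsto N\omega^{N-1}$ on $[1,\infty)$ at the threshold $\omega=e^{-2}$ and then apply Jensen's inequality to pass from $\mathrm{E}\{N\omega^{N-1}\}$ to $\mathrm{E}\{N\}\,\omega^{\mathrm{E}\{N\}-1}$. Your route is in fact slightly more streamlined---by committing to $v=u(1-e^{-2})$ from the start you bypass the paper's constrained maximization over $v\in[(1-e^{-2})u,u]$ and the ensuing case split on whether $\mathrm{E}\{N\}\gtrless e^{2}/(e^{2}-1)$, arriving at the same sufficient condition directly.
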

\begin{proof}
See Appendix B.
\end{proof}
For example, if $n_{\max}=2$, (\ref{goolj}) gives $\mathrm{E}\{N\} \leq 1.274$, or equivalently $q_{1} \geq 0.726$.
By example 1, we notice that $\eta_{\mathrm{FH}}^{(1)}\geq \eta_{\mathrm{FD}}^{(1)}$ if and only if
$q_{1} \geq 0.667$.

\textit{Case 2-} As for $\eta^{(2)}$, along the same lines leading
to (\ref{goolj}),  a sufficient condition for
$\eta_{\mathrm{FH}}^{(2)}>\eta_{\mathrm{FD}}^{(2)}$ is given in the
following Proposition.
\begin{proposition}
As far as
\begin{equation}
\label{vegggg}
\frac{1}{\mathrm{E}\{N\}}\left(1-\frac{1}{\mathrm{E}\{N\}}\right)^{\mathrm{E}\{N\}-1}>\frac{1}{n_{\max}},\end{equation}
we have $\eta_{\mathrm{FD}}^{(2)}<\eta_{\mathrm{FH}}^{(2)}$.\end{proposition}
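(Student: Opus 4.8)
The plan is to lower-bound $\eta_{\mathrm{FH}}^{(2)}$ by evaluating the objective of~(\ref{pof2}) at a single well-chosen admissible hopping parameter and to compare the outcome with $\eta_{\mathrm{FD}}^{(2)}$. First I would record the FD side: since the FD controller sets $n_{\mathrm{des}}=n_{\max}$ and $N\le n_{\max}$ almost surely, every active user occupies a band of $u/n_{\max}$ sub-bands, so $N_{\mathrm{serv,FD}}=N$ and $\mathsf{SMG}_{\mathrm{FD}}(n_{\max},N)/N=u/(2n_{\max})$ identically, whence $\eta_{\mathrm{FD}}^{(2)}=u/(2n_{\max})$. On the FH side, for any $v\in(0,u)$ formula~(\ref{llll}) gives $\mathsf{SMG}_{\mathrm{FH}}(v,N)/N=\tfrac{v}{2}(1-\tfrac{v}{u})^{N-1}>0$ for every realization of $N$, so $N_{\mathrm{serv,FH}}=N$ and the indicator in~(\ref{pof2}) equals $1$; hence
\[
\eta_{\mathrm{FH}}^{(2)}\ \ge\ \frac{v}{2}\,\mathrm{E}\!\left\{\left(1-\frac{v}{u}\right)^{\!N-1}\right\}\qquad\text{for every }v\in(0,u).
\]

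The decisive step is a Jensen bound. For fixed $v$, the function $n\mapsto(1-\tfrac{v}{u})^{n-1}=\exp\!\big((n-1)\ln(1-\tfrac{v}{u})\big)$ is convex on $\mathbb{R}$, so $\mathrm{E}\{(1-\tfrac{v}{u})^{N-1}\}\ge(1-\tfrac{v}{u})^{\mathrm{E}\{N\}-1}$. Plugging this in and then choosing $v=u/\mathrm{E}\{N\}$ --- which is legitimate because $q_0=0$ forces $1<\mathrm{E}\{N\}\le n_{\max}<\infty$ whenever~(\ref{vegggg}) is non-vacuous, so this $v$ lies strictly inside $(0,u)$ --- and using the one-line optimization that $t\mapsto t(1-t)^{m-1}$ peaks on $(0,1)$ at $t=1/m$, I obtain
\[
\eta_{\mathrm{FH}}^{(2)}\ \ge\ \frac{u}{2\,\mathrm{E}\{N\}}\left(1-\frac{1}{\mathrm{E}\{N\}}\right)^{\!\mathrm{E}\{N\}-1}.
\]
Hypothesis~(\ref{vegggg}) asserts exactly that this lower bound exceeds $u/(2n_{\max})=\eta_{\mathrm{FD}}^{(2)}$, which closes the argument.

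I do not expect a genuine obstacle; the proof runs along the same lines as the one for~(\ref{goolj}) but is cleaner, since in $\eta^{(2)}$ the factor $N$ in $\mathsf{SMG}_{\mathrm{FH}}$ cancels the $N$ in the denominator, leaving an expectation of the form $\mathrm{E}\{a^{N-1}\}$ to which Jensen applies directly, with no residual $N\,a^{N-1}$ term. The only points that need care are (i) that Jensen is applied to the convex real extension of the integer-valued exponent $N-1$, and (ii) that $v=u/\mathrm{E}\{N\}\in(0,u)$ so that service capability --- and hence the indicator being $1$ --- is preserved; both reduce to $1<\mathrm{E}\{N\}<\infty$. Finally, note that this $v$ need not be the true maximizer $v^{\dagger}$ of~(\ref{pof2}): since $\eta_{\mathrm{FH}}^{(2)}$ is a supremum, any admissible choice furnishes a valid lower bound, and the one used here is precisely the maximizer of the Jensen-relaxed objective.
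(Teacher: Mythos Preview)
Your proof is correct and follows essentially the same route as the paper's: compute $\eta_{\mathrm{FD}}^{(2)}=u/(2n_{\max})$, apply Jensen's inequality to the convex map $n\mapsto(1-v/u)^{n-1}$ to lower-bound $\eta_{\mathrm{FH}}^{(2)}$, and then optimize over $v$ at $v=u/\mathrm{E}\{N\}$. If anything, you are slightly more careful than the paper in justifying why the indicator $\mathbb{1}(N_{\mathrm{serv}}=N)$ can be dropped and why the chosen $v$ is admissible.
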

\begin{proof}
See Appendix C.
\end{proof}


For example, if $n_{\max}=10$, $q_{1}=0.22$,
$q_{2}=q_{3}=q_{4}=0.24$ and $q_{5}=q_{6}=\cdots=q_{10}=0.01$, one
has $\mathrm{E}\{N\}=2.78$, which satisfies (\ref{vegggg}).
Therefore, we conclude
$\eta_{\mathrm{FH}}^{(2)}>\eta_{\mathrm{FD}}^{(2)}$. Computing these
quantities directly, we get $\eta_{\mathrm{FD}}^{(2)}=\frac{u}{16}$
and
\begin{eqnarray}
\label{ }
\eta_{\mathrm{FH}}^{(2)}&=&\frac{1}{2}\max_{v\in[0,u]}\left\{v\sum_{n=1}^{10}q_{n}\left(1-\frac{v}{u}\right)^{n-1}\right\}\notag\\
&\stackrel{(a)}{=}&\frac{1}{2}u\max_{\omega_{v}\in[0,1]}(1-\omega_{v})\left(0.22+0.24(\omega_{v}+\omega_{v}^{2}+\omega_{v}^{3})+0.01(\omega_{v}^{4}+\omega_{v}^{5}+\omega_{v}^{6}+\omega_{v}^{7}+\omega_{v}^{8}+\omega_{v}^{9})\right)\notag\\
&\stackrel{(b)}{=}&0.1121u
\end{eqnarray}
where in $(a)$, we define $\omega_{v}\triangleq 1-\frac{v}{u}$ and $(b)$ is
obtained by setting $\omega_{v}=0.28$, or equivalently
$v=v^{\dagger}=0.72u$. This yields
$\frac{\eta_{\mathrm{FH}}^{(2)}}{\eta_{\mathrm{FD}}^{(2)}}=1.7936$.

\textit{Example 4}- In this example, we assume a Poisson
distribution on the number of active users, i.e.,
$q_{n}=\frac{e^{-\lambda}\lambda^{n}}{n!}$, $n\geq 0$. This
assumption corresponds to the scenario where potentially a large
number $n_{\max}$ of users may share the spectrum. However, the
activation probability $p$ of each user is very small. One can well
approximate the number of active users in the network by a Poisson
random variable with parameter $\lambda = pn_{\max}$. We have
\begin{eqnarray}
\label{ }
\mathrm{E}\left\{\frac{\mathsf{SMG}_{\mathrm{FH}}(v,N)}{N}\mathbb{1}(N_{\mathrm{serv,FH}}=N)\right\}&\stackrel{(a)}{=}&\mathrm{E}\left\{\frac{\mathsf{SMG}_{\mathrm{FH}}(v,N)}{N}\right\}\notag\\&=&\frac{1}{2}\sum_{n=1}^{\infty}\frac{e^{-\lambda}\lambda^{n}}{n!}\left(v\left(1-\frac{v}{u}\right)^{n-1}\right)\notag\\
&=&\frac{1}{2}\frac{v}{1-\frac{v}{u}}\sum_{n=1}^{\infty}\frac{e^{-\lambda}\lambda^{n}}{n!}\left(1-\frac{v}{u}\right)^{n}\notag\\
&\stackrel{(b)}{=}&\frac{1}{2}\frac{v}{1-\frac{v}{u}}\left(e^{\lambda\left(e^{\ln\left(1-\frac{v}{u}\right)}-1\right)}-e^{-\lambda}\right) \notag\\
&=& \frac{e^{-\lambda}(1-\omega_{v})\left(e^{\lambda\omega_{v}}-1\right)}{2\omega_{v}}u.
\end{eqnarray}
In the above equation, $(a)$ results from the fact that $\mathbb{1}(N_{\mathrm{serv,FH}}=N)=0$ whenever $v=u$ and $N >1$, however, $\mathsf{SMG}_{\mathrm{FH}}(v,N) =0$ in this case. $(b)$ follows by the fact that
$E\{e^{tN}\}=e^{\lambda(e^{t}-1)}$ for any $t$ and
$\omega_{v}=1-\frac{v}{u}$ as defined in example 3. It can be easily
seen that the optimal $\omega_{v}$ satisfies the nonlinear equation
$e^{-\lambda\omega_{v}}=1-\lambda\omega_{v}+\lambda\omega_{v}^{2}$.
Solving this for $\omega_{v}$, we find out that $v^{\dagger}$ is not
equal to $u$ for all $\lambda >2$. The following table lists the optimum values of
$\omega_{v}$, i.e., $\omega_{v^{\dagger}}$, the values of
$v^{\dagger}$ and also the corresponding average minimum
multiplexing gain per user $\eta_{\mathrm{FH}}^{(2)}$ for
$\lambda\in\{3,\cdots,10\}$.
\begin{equation}\begin{tabular}{|c|c|c|c|c|c|c|c|c|c|c|}
\hline
$\lambda$&$3$& $4$& $5$ &$6$&$7$&$8$&$9$&$10$\\
\hline
$\omega_{v^{\dagger}}$& $0.4536$&$0.6392$&$0.7347$&$0.7912$&$0.828$&$0.8537$&$0.8727$&$0.8873$\\
\hline
$v^{\dagger}$&$0.5464u$& $ 0.3608u$&   $ 0.2653u$&$ 0.2088u$&   $0.1720u$&  $  0.1463u$&$0.1273u$&$0.1127u$\\
\hline
$\eta_{\mathrm{FH}}^{(2)}$& $0.0869u$&$0.0615u$&$0.0467u$&$0.0374u$&$0.0311u$&$0.0266u$&$0.0232u$&$0.0206u$\\
\hline
\end{tabular}.\end{equation}

In order to provide fairness among the users, the FD system tries to
serve as many users as it can. Since it is not possible to serve
more than $u$ users and $n_{\max} \gg u$, the central controller
sets $n_{\mathrm{des}}=u$. Therefore, $N_{\mathrm{serv,FD}}<N$ if
and only if $N>u$. Using this and  by (\ref{pooohj}),
\begin{eqnarray}
\eta_{\mathrm{FD}}^{(2)}&=&\mathrm{E}\left\{\frac{\mathsf{SMG}_{\mathrm{FD}}(n_{\mathrm{des}},N)}{N}\mathbb{1}(N_{\mathrm{serv,FD}}=N)\right\}\notag\\
&=&\mathrm{E}\left\{\frac{\mathsf{SMG}_{\mathrm{FD}}(u,N)}{N}\Bigg|N\leq u\right\}\Pr\{N\leq u\}\notag\\&=&\frac{1}{2}\sum_{n=1}^{u}\frac{e^{-\lambda}\lambda^{n}}{n!}.
\end{eqnarray}
We have sketched $\eta_{\mathrm{FH}}^{(2)}$ and $\eta_{\mathrm{FD}}^{(2)}$
in terms of $\lambda$ in fig. 1 and fig. 2 for the cases $u=7$ and $u=20$, respectively.
It is noticeable that $\eta_{\mathrm{FH}}^{(2)}$ scales linearly with $u$.
However, $\eta_{\mathrm{FD}}^{(2)}$ is always less than $\frac{1}{2}$ no matter how
large $u$ is. Thus, as $u$ increases, the advantage of FH over FD becomes more apparent.
$\rightmark{\square}$
\begin{figure}[h!b!t]
  \centering
  \includegraphics[scale=.6] {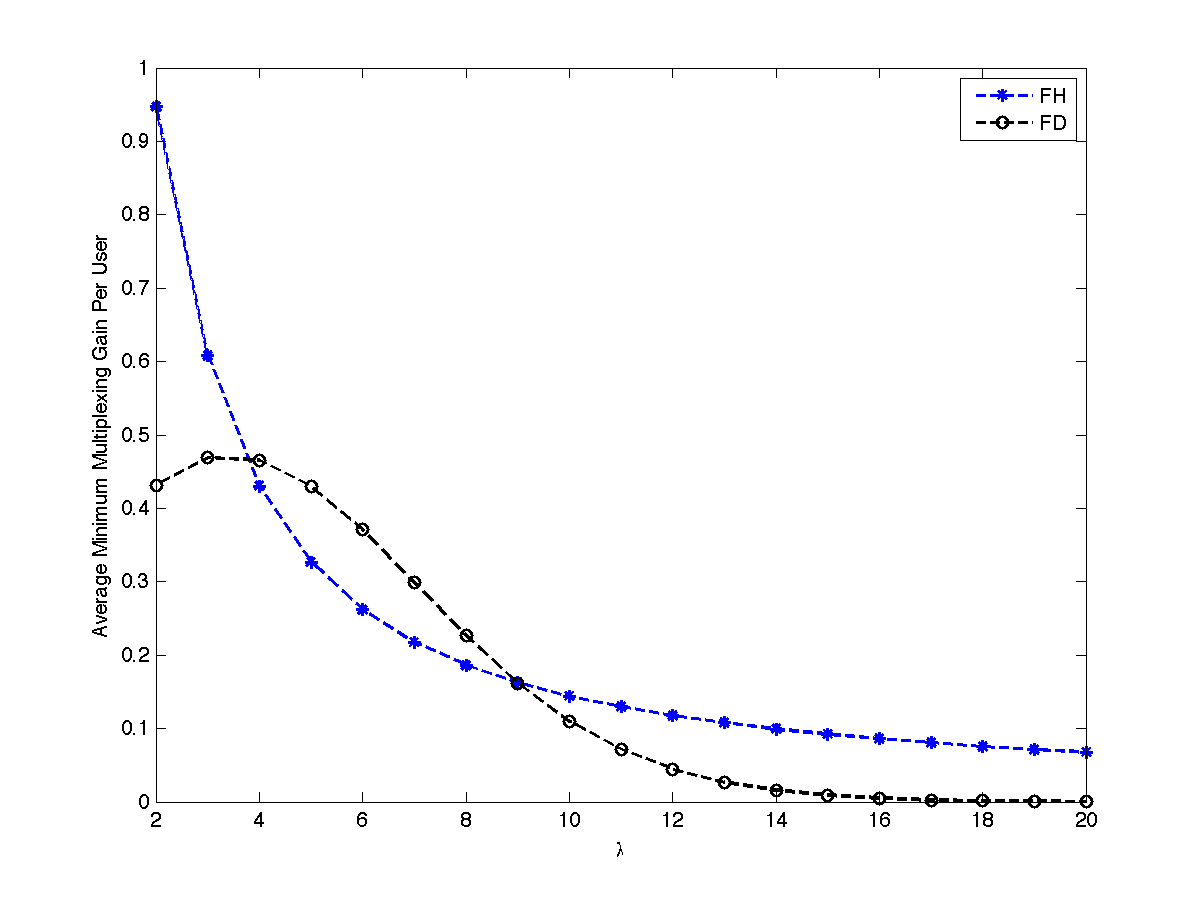}
  \caption{Curves of $\eta_{\mathrm{FH}}^{(2)}$ and $\eta_{\mathrm{FD}}^{(2)}$ in terms of $\lambda$ in a network with $u=7$ sub-bands.}
  \label{figapp1}
 \end{figure}
 \begin{figure}[h!b!t]
  \centering
  \includegraphics[scale=.6] {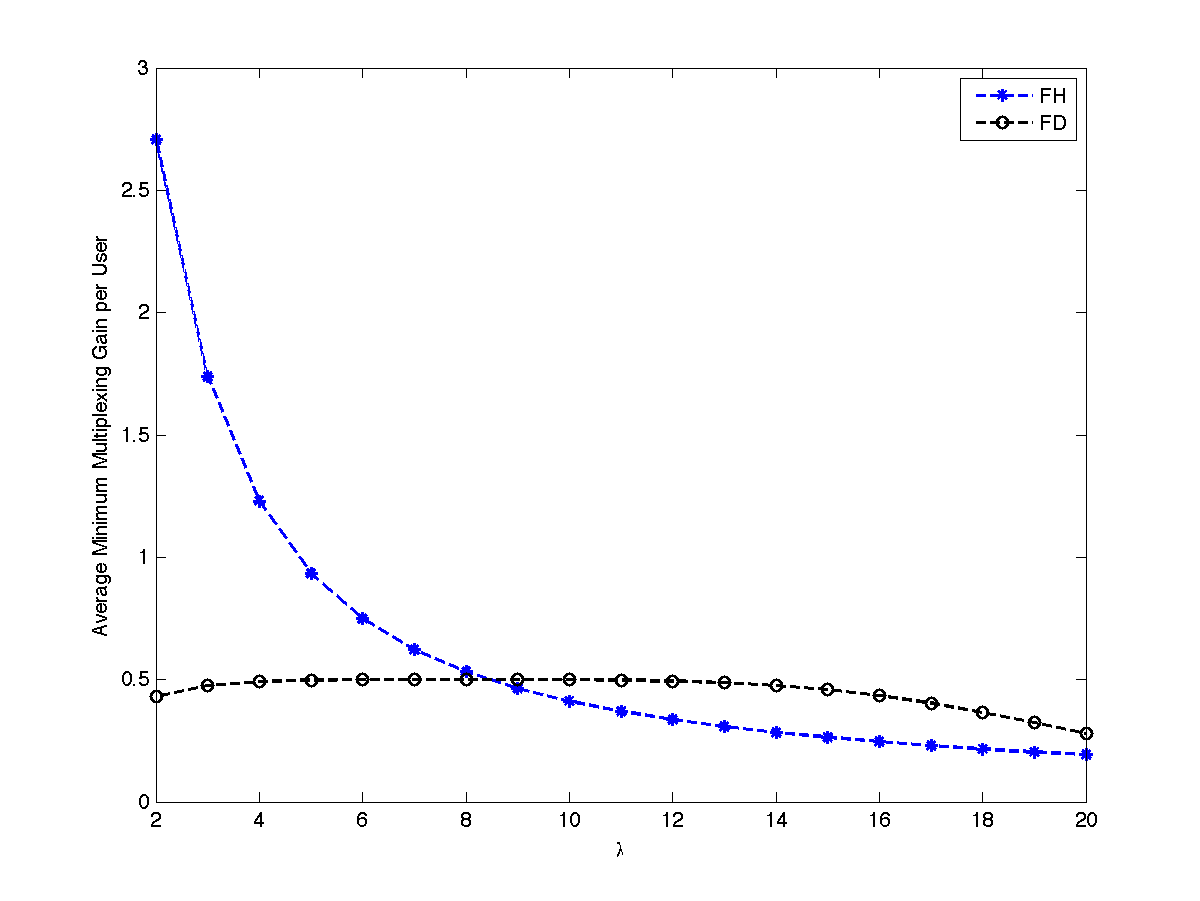}
  \caption{Curves of $\eta_{\mathrm{FH}}^{(2)}$ and $\eta_{\mathrm{FD}}^{(2)}$ in terms of $\lambda$ in a network with $u=20$ sub-bands.}
  \label{figapp1}
 \end{figure}
\newpage

$\bullet$ \textit{Minimum nonzero multiplexing gain per user}

 The minimum nonzero multiplexing gain per user is the smallest nonzero
multiplexing gain that a user in the network attains for different
realizations in terms of the number of active users. Assuming
$n_{\max}<\infty$, this happens when there are exactly $n_{\max}$
active users in the system. As the FD system is already designed to
handle the case where $n_{\max}$ users are present in the network,
the minimum multiplexing gain per user is automatically higher in FD
as compared to FH.  Setting $n_{\mathrm{des}}=n_{\max}$, we have
$\eta_{\mathrm{FD}}^{(3)}=\frac{{\mathsf{SMG}}_{\mathrm{FD}}(u,n_{\max})}{n_{\max}}=\frac{u}{2
n_{\max}}$. In the case of FH, we assume that all users select
$v=\frac{u}{n_{\max}}$. Hence,
$\eta_{\mathrm{FH}}^{(3)}=\frac{{\mathsf{SMG}}_{\mathrm{FH}}\left(\frac{u}{n_{\max}},n_{\max}\right)}{n_{\max}}=\frac{u}{2
n_{\max}}\left(1-\frac{1}{n_{\max}}\right)^{n_{\max}-1}$. Clearly,
$\frac{1}{e}\leq
\frac{\eta_{\mathrm{FH}}^{(3)}}{\eta_{\mathrm{FD}}^{(3)}}\leq 1$ as
$\left(1-\frac{1}{n_{\max}}\right)^{n_{\max}-1}$ approaches
$\frac{1}{e}$ from above by increasing $n_{\max}$. Therefore, the
loss incurred in terms of $\eta^{(3)}$ for the FH system is always less
than $\frac{1}{e}$.

 $\bullet$ \textit{Service capability}

Service capability demonstrates the fraction of users receiving
service among the whole active users in the network. As mentioned earlier, a user is said
to receive service whenever the achieved multiplexing gain of the
user is nonzero. In the FD scenario, if $N>u$, then a fraction of
users cannot share the spectrum. However, in case $\Pr\{N\leq
u\}=1$, the FD scheme achieves the full service capability. As for
the FH scheme, we already know that as far as all users hop over
proper subsets of the sub-bands, every user achieves a nonzero
multiplexing gain. The following examples offer comparisons between
FD and FH in terms of the service capability.

\textit{Example 5}- In this example, we consider a setup where
$n_{\max}<\infty$. The central controller in FD simply sets
$n_{\mathrm{des}}=n_{\max}$ and the service capability is always
equal to $1$. The number of served users $N_{\mathrm{serv,FH}}$ in
the FH scenario can be written as
\begin{equation}
\label{doogh}
N_{\mathrm{serv,FH}}=\left\{\begin{array}{cc}
   N   &  \textrm{$N=1$ or ($N>1$ and $v\neq u$)}  \\
    0  &  \textrm{oth.}
\end{array}\right..
\end{equation}
Therefore, as far as $v\neq u$, we have $N_{\mathrm{serv,FH}}=N$ and
the service capability is one. This shows that to achieve the
maximum service capability in a system where $n_{\max}>1$, the
hopping parameter $v$ must be strictly less than $u$.
$\rightmark{\square}$

\textit{Example 6}- In this example, we provide a case where
$n_{\max}$ is not finite. Let us assume the distribution of the
number of active users in the network is a Poisson distribution with
parameter $\lambda$, i.e., $q_{n}=\frac{\lambda^n e^{-\lambda}}{n!}$
for $n\geq 0$ where $\lambda > 1$. Let us compute $v^{*}$ for the FH
scenario. We have,
\begin{eqnarray}
\mathrm{E}\{\mathsf{SMG}_{\mathrm{FH}}(v,N)\}&=&\frac{1}{2}\sum_{n=1}^{\infty}\frac{e^{-\lambda}\lambda^{n}}{n!}\left(nv\left(1-\frac{v}{u}\right)^{n-1}\right)\notag\\
&=&\frac{1}{2}v\sum_{n=1}^{\infty}\frac{e^{-\lambda}\lambda^{n}}{(n-1)!}\left(1-\frac{v}{u}\right)^{n-1}\notag\\
&=&\frac{1}{2}\lambda v\sum_{n=0}^{\infty}\frac{e^{-\lambda}\lambda^{n}}{n!}\left(1-\frac{v}{u}\right)^{n}\\
&=& \frac{1}{2}\lambda ve^{-\frac{\lambda v}{u}}. \label{bil2}
\end{eqnarray}
Thus,
\begin{equation}
\label{pooskooshi}
v^{*}=\arg\max_{v}\mathrm{E}\{\mathsf{SMG}_{\mathrm{FH}}(v,N)\}=\frac{u}{\lambda}.\end{equation}
Since $\lambda\neq 1$, we get $v^{*}\neq u$. Thus, choosing
$v=v^{*}$ maximizes
$\mathrm{E}\left\{\frac{N_{\mathrm{serv,FH}}}{N}\right\}$ and
$\mathrm{E}\{\mathsf{SMG}_{\mathrm{FH}}(v,N)\}$ simultaneously,
i.e., $\eta_{\mathrm{FH}}^{(4)}=1$.

In the FD system, $N_{\mathrm{serv,FD}}$ is given by
\begin{equation}
\label{pmnbv}
N_{\mathrm{serv,FD}}=\left\{\begin{array}{cc}
    N  & N\leq n_{\mathrm{des}}   \\
     n_{\mathrm{des}} &   N>n_{\mathrm{des}}
\end{array}\right..
\end{equation}
Thus,
\begin{eqnarray}
\mathrm{E}\left\{\frac{N_{\mathrm{serv,FD}}}{N}\right\}&=&\Pr\{ N \leq n_{\mathrm{des}}\} +n_{\mathrm{des}} \sum_{n=n_{\mathrm{des}}+1}^{\infty}\frac{q_{n}}{n}\notag\\
&=&1-\Pr\{N\geq n_{\mathrm{des}}+1\}+n_{\mathrm{des}} \sum_{n=n_{\mathrm{des}}+1}^{\infty}\frac{q_{n}}{n}\notag\\
&=&1-\sum_{n=n_{\mathrm{des}}+1}q_{n}\left(1-\frac{n_{\mathrm{des}}}{n}\right).
\end{eqnarray}
By this expression, it is clear that to maximize
$\mathrm{E}\left\{\frac{N_{\mathrm{serv,FD}}}{N}\right\}$, one must
select $n_{\mathrm{des}}$ as large as possible. This basically
justifies the assumption we made about selecting
$n_{\mathrm{des}}=u$ in the FD scheme in the case where $n_{\max}$
is not finite. Thus,
\begin{equation}
\label{ }
\eta_{\mathrm{FD}}^{(4)}=1-\sum_{n=u+1}q_{n}\left(1-\frac{u}{n}\right)\end{equation}
For instance, in the case of $u=5$ and $\lambda=3$, we have
$\eta_{\textrm{FD}}^{(4)}=0.9806$. $\rightmark{\square}$

\section{Adaptive Frequency Hopping}

The results of the previous section are obtained based on the
assumption that the hopping parameter $v$ is fixed and is not
adaptively changed based on the number of active users. The
performance of the FH system can be improved by letting the
transmitters adapt their hopping parameter based on the number of
active users using (\ref{yuu}). We refer to this scenario as
Adaptive Frequency Hopping (AFH). In the following example, we study
the performance improvement offered by AFH over FH in terms of
$\eta^{(1)}$ and $\eta^{(2)}$.

\textit{Example 7}- Let us assume that the number of active users is
a Poisson random variable with parameter  $\lambda>1$.  We already
have
\begin{eqnarray} \label{bil}
\eta_{\mathrm{FH}}^{(1)}=\frac{u}{2e},
\end{eqnarray}
while by (\ref{bxxxx}),
\begin{eqnarray}
\label{fag}
\eta_{\mathrm{AFH}}^{(1)}  =  \frac{u}{2}\sum_{n=1}^{\infty}\frac{e^{-\lambda} \lambda^{n}}{n!} \Big(1-\frac{1}{n}\Big)^{n-1}.
\end{eqnarray}
Figure \ref{fig6} shows the plots of $\eta_{\mathrm{FH}}^{(1)}$ and
$\eta_{\mathrm{AFH}}^{(1)}$ versus $\lambda$ for $u=10$. It is
observed that $\eta_{\mathrm{FH}}^{(1)}$ does not change with
$\lambda$, while $\eta_{\mathrm{AFH}}^{(1)}$ decreases by increasing
$\lambda$. This indicates that in a crowded network (large
$\lambda$), AFH does not provide any significant advantage over FH
in terms of $\eta^{(1)}$.
  \begin{figure}[h!b!t]
  \centering
  \includegraphics[scale=.6] {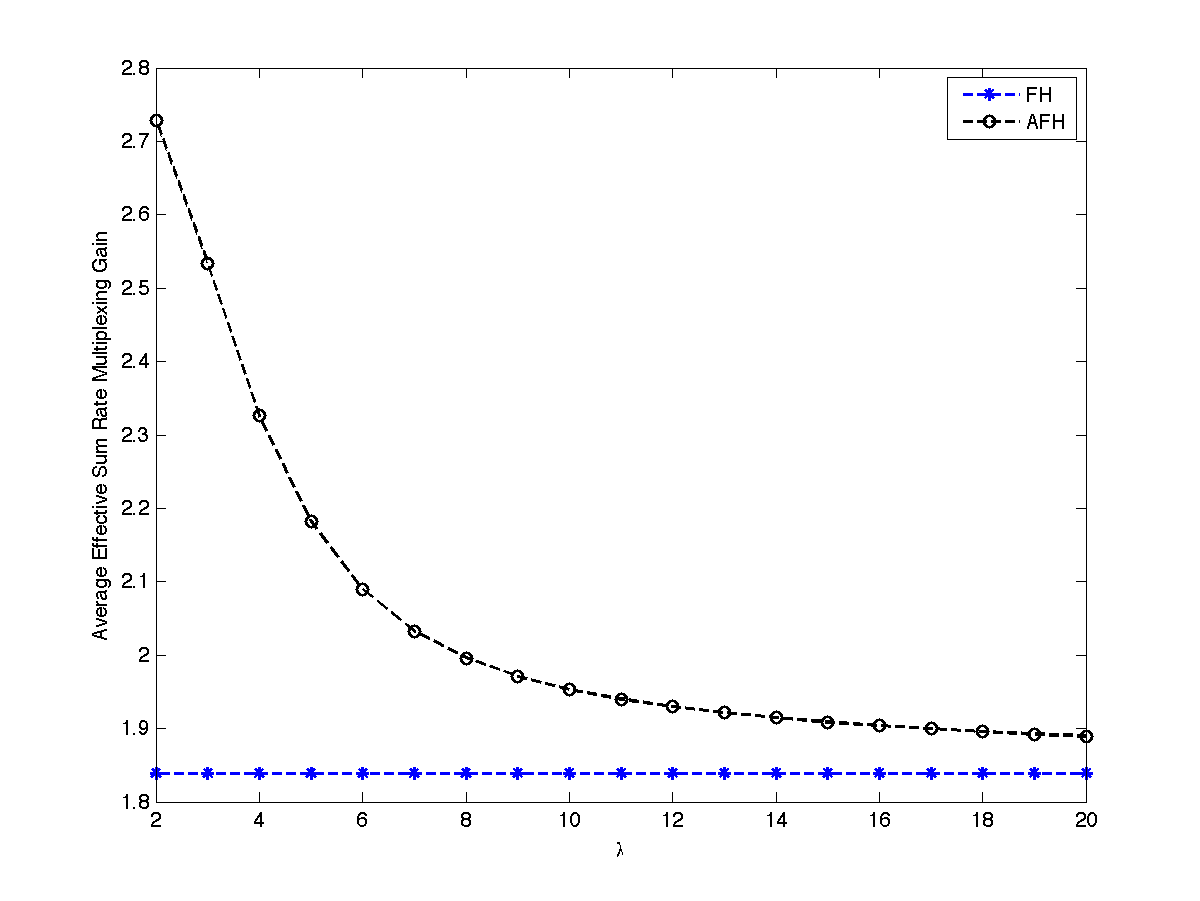}
  \caption{$\eta_{\mathrm{AFH}}^{(1)}$ versus $\eta_{\mathrm{FH}}^{(1)}$ for $u=10$. }
  \label{fig6}
 \end{figure}
 
We have already calculated $\eta_{\mathrm{FH}}^{(2)}$ in example 4
in a system where $3\leq\lambda\leq10$. However, in case of AFH,
 \begin{eqnarray}
\label{fag}
\eta_{\mathrm{AFH}}^{(2)}= \frac{u}{2}\sum_{n=1}^{\infty}\frac{e^{-\lambda} \lambda^{n}}{n!}\frac{1}{n}\Big(1-\frac{1}{n}\Big)^{n-1}.
\end{eqnarray}
Figure \ref{fig7} presents the plots of $\eta_{\mathrm{FH}}^{(2)}$
and $\eta_{\mathrm{AFH}}^{(2)}$ versus $\lambda$ for $u=10$. Both
$\eta_{\mathrm{FH}}^{(2)}$ and $\eta_{\mathrm{AFH}}^{(2)}$ decrease
by increasing $\lambda$. However, the ratio
$\frac{\eta_{\mathrm{AFH}}^{(2)}}{\eta_{\mathrm{FH}}^{(2)}}$
decreases as $\lambda$ increases. This indicates that for large
values of $\lambda$, AFH does also not provide any significant advantage
over FH in terms of $\eta^{(2)}$. $\rightmark{\square}$
\begin{figure}[h!b!t]
  \centering
  \includegraphics[scale=.6] {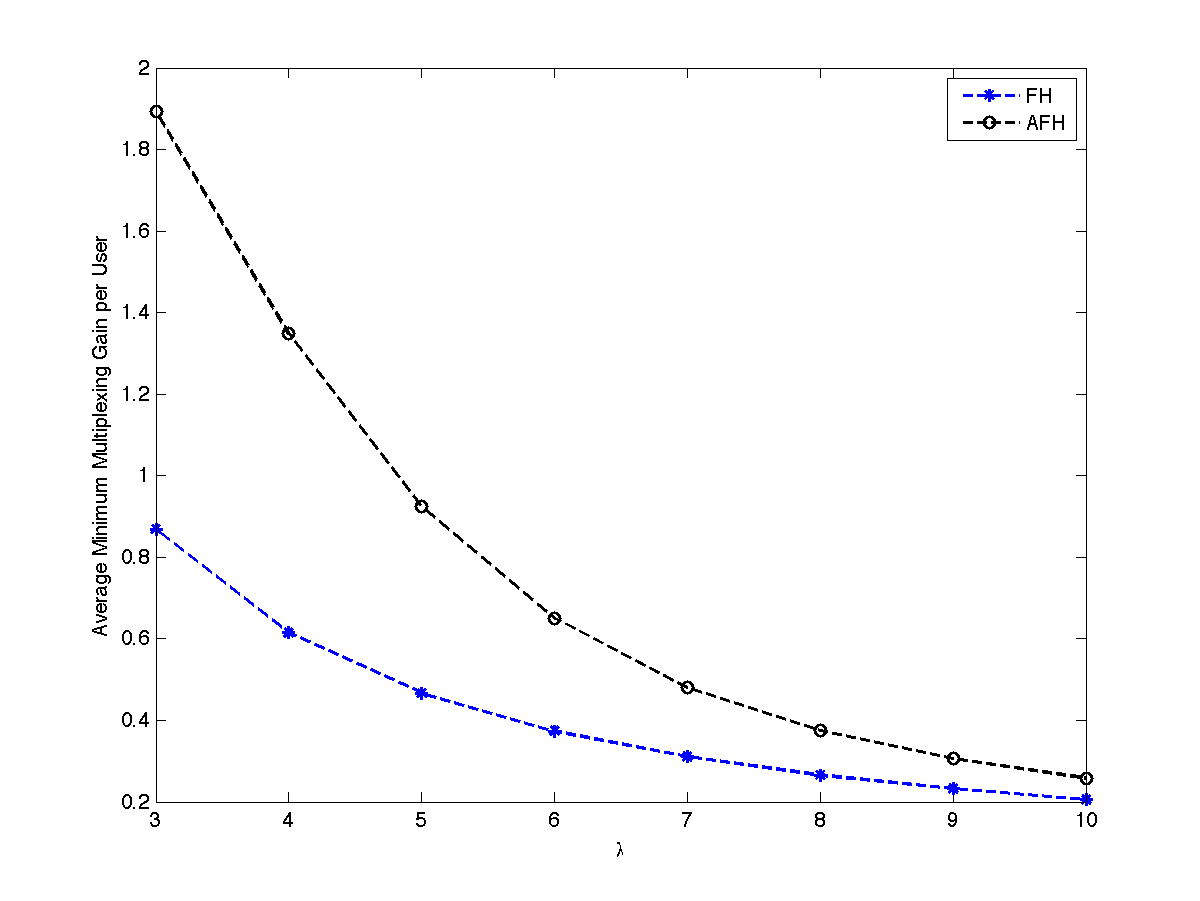}
  \caption{$\eta_{\mathrm{AFH}}^{(2)}$ versus $\eta_{\mathrm{FH}}^{(2)}$ for $u=10$.  }
  \label{fig7}
 \end{figure}

  \section{Conclusion}
We have addressed a decentralized  wireless communication network
with a fixed number $u$ of frequency sub-bands to be shared among
$N$ transmitter-receiver pairs. It is assumed that the number of
active users is a random variable with a given distribution.
Moreover, users are assumed to be unaware of each other's codebooks and hence, no
multiuser detection is possible.  We proposed a randomized Frequency
Hopping (FH) scheme in which each transmitter randomly hops over
subsets of the $u$ sub-bands from transmission to transmission.
Assuming all users transmit Gaussian signals, the distribution of
noise plus interference is mixed Gaussian, which makes
the calculation of the mutual information between the input and output
of each user intractable. We derived lower and upper bounds on this
mutual information and demonstrated that for large SNR values, the
two bounds coincide. This observation enabled us to compute the
sum multiplexing gain of the system and obtain the optimum
hopping strategy for maximizing this value. We compared the
performance of the  FH with that of the FD in terms of the following
performance measures: average sum multiplexing gain
$(\eta^{(1)})$, average minimum multiplexing gain per user $(\eta^{(2)})$,
minimum nonzero multiplexing gain per user $(\eta^{(3)})$ and
service capability ($\eta^{(4)}$). We showed that (depending on the
probability mass function of the number of active users) the FH
system can offer a significant improvement in terms of $\eta^{(1)}$
and $\eta^{(2)}$ (implying a more efficient usage of the spectrum).
It was also shown that
$\frac{1}{e}\leq\frac{\eta_{\mathrm{FH}}^{(3)}}{\eta_{\mathrm{FD}}^{(3)}}\leq
1$, i.e., the loss incurred in terms of $\eta^{(3)}$ is not more
than $\frac{1}{e}$. Moreover, computation of the so-called service
capability showed that in the FH system any number of users can
coexist fairly, while the maximum number of users in the FD system
is limited by the number of  sub-bands.

 \section*{Appendix A; Proof of Lemmas 2}
Let us consider a general $t\times 1$ vector mixed Gaussian
distribution $p_{\vec{\Theta}}(\vec{\theta})$ with different
covariance matrices $\{C_{l}\}_{l=1}^{L}$ and associated
probabilities $\{p_{l}\}_{l=1}^{L}$ given by
\begin{equation}
\label{gb}
p_{\vec{\Theta}}(\vec{\theta})=\sum_{l=1}^{L}p_{l}g_{t}(\vec{\theta},C_{l}),\end{equation}
where $g_{t}(\vec{\theta},C_{l})=\frac{1}{(2\pi)^{\frac{t}{2}}(\det C_{l})^{\frac{1}{2}}}\exp \left \{-\frac{1}{2}\vec{\theta}^{T}C_{l}^{-1}\vec{\theta} \right\}$.
 Hence,
\begin{equation} \label{pteta}
\int p_{\vec{\Theta}}(\vec{\theta})\log p_{\vec{\Theta}}(\vec{\theta})d\vec{\theta}=\sum_{l=1}^{L}J_{l}
\end{equation}
where $J_{l}\triangleq p_{l}\int g_{t}(\vec{\theta},C_{l})\log p_{\vec{\Theta}}(\vec{\theta})d\vec{\theta}$ for $1\leq l\leq L$.
To find a lower bound on $J_{l}$, we observe that
\begin{eqnarray}
\label{gbb}
J_{l} &=& p_{l}\int g_{t}(\vec{\theta},C_{l})\log\bigg(\sum_{m=1}^{L}p_{m}g_{t}(\vec{\theta},C_{m})\bigg)d\vec{\theta} \notag\\
&\geq& p_{l}\int g_{t}(\vec{\theta},C_{l})\log\big(p_{l}g_{t}(\vec{\theta},C_{l})\big)d\vec{\theta} \notag\\
&=& \left(p_{l}\log p_{l}\right)\int g_{t}(\vec{\theta},C_{l})d\vec{\theta}+p_{l}\int g_{t}(\vec{\theta},C_{l})\log g_{t}(\vec{\theta},C_{l})d\vec{\theta}\notag\\
&=& p_{l}\log p_{l}+p_{l}\int g_{t}(\vec{\theta},C_{l})\log g_{t}(\vec{\theta},C_{l})d\vec{\theta}\end{eqnarray}
Using this together with (\ref{pteta}) yields
\begin{eqnarray}
\mathrm{h}(\vec{\Theta}) &=& -\int p_{\vec{\Theta}}(\vec{\theta})\log p_{\vec{\Theta}}(\vec{\theta})d\vec{\theta}\notag\\&=&-\sum_{l=1}^{L}J_{l} \notag\\
&\leq&  -p_{l}\log p_{l}-p_{l}\int g_{t}(\vec{\theta},C_{l})\log g_{t}(\vec{\theta},C_{l})d\vec{\theta}\notag\\
&\stackrel{(a)}{=}&-\sum_{l=1}^{L}p_{l}\log
p_{l}+\frac{1}{2}\sum_{l=1}^{L}p_{l}\log\left((2\pi e)^{t}\det
C_{l}\right)\end{eqnarray}
where in $(a)$, we have used the fact that
the differential entropy of a $t\times 1$ Gaussian vector with
covariance matrix $C_{l}$ is $\frac{1}{2}\log\left((2\pi e)^{t}\det
C_{l}\right)$.

Let $t=1$ and $\vec{\Theta}=Z_{i,j}$. Therefore,
\begin{eqnarray}
 \mathrm{h}(Z_{i,j})&\leq& \frac{1}{2}\sum_{l=0}^{L_{i}}a_{i,l}\log(2\pi e\sigma_{i,l}^{2})-\sum_{l=0}^{L_{i}}a_{i,l}\log a_{i,l} \notag\\
&=& \frac{1}{2}\sum_{l=0}^{L_{i}}a_{i,l}\log(2\pi e\sigma_{i,0}^{2})+\frac{1}{2}\sum_{l=0}^{L_{i}}a_{i,l}\log\frac{\sigma_{i,l}^{2}}{\sigma_{i,0}^{2}}-\sum_{l=0}^{L_{i}}a_{i,l}\log a_{i,l}\notag\\
&=&\log(\sqrt{2\pi e}\sigma_{i,0})+\frac{1}{2}\sum_{l=1}^{L_{i}}a_{i,l}\log\frac{\sigma_{i,l}^{2}}{\sigma_{i,0}^{2}}-\sum_{l=0}^{L_{i}}a_{i,l}\log a_{i,l}\end{eqnarray}
However, for all $l\geq 1$, we have $\frac{\sigma_{i,l}^{2}}{\sigma_{i,0}^{2}}\leq \frac{\sigma_{i,L_{i}}^{2}}{\sigma_{i,0}^{2}}=1+c_{i,L_{i}}\gamma$. Thus,
\begin{eqnarray}
 \mathrm{h}(Z_{i,j})&\leq& \frac{1}{2}\sum_{l=1}^{L_{i}}a_{i,l}\log(1+c_{i,L_{i}}\gamma)+\log(\sqrt{2\pi e}\sigma_{i,0})-\sum_{l=0}^{L_{i}}a_{i,l}\log a_{i,l} \notag\\
&=& \frac{1}{2}(1-a_{i,0})\log(1+c_{i,L_{i}}\gamma)+\log(\sqrt{2\pi e}\sigma_{i,0})-\sum_{l=0}^{L_{i}}a_{i,l}\log a_{i,l}.
\end{eqnarray}
This concludes the proof of Lemma 2.
 \section*{Appendix B; Proof of Proposition 1}
We have
$\eta_{\mathrm{FD}}^{(1)}=\frac{\mathrm{E}\{N\}u}{2n_{\max}}$ and
$\eta_{\mathrm{FH}}^{(1)}=\frac{1}{2}\max_{v}\left\{v\mathrm{E}\left\{N\left(1-\frac{v}{u}\right)^{N-1}\right\}\right\}$.
Let us define $\Omega(v,N)\triangleq N\omega_{v}^{N-1}$ where
$\omega_{v}=1-\frac{v}{u}$. Thinking of $N$ as a real parameter for
the moment, we have $\frac{\partial^{2}}{\partial
N^{2}}\Omega(v,N)=\omega_{v}^{N-1}\left(N\left(\ln
\omega_{v}\right)^{2}+2\ln\omega_{v}\right)$. As $N\geq 1$, we have
$\frac{\partial^{2}}{\partial
N^{2}}\Omega(v,N)\geq\omega_{v}^{N-1}\left(\left(\ln
\omega_{v}\right)^{2}+2\ln\omega_{v}\right)$. But, $\left(\ln
\omega_{v}\right)^{2}+2\ln\omega_{v}\geq 0$ if and only if
$\omega_{v}\leq \frac{1}{e^{2}}$ or $\omega_{v}\geq 1$. Since
$\omega_{v}\leq 1$, we get $\omega_{v}\leq \frac{1}{e^{2}}$. This
implies that the function $\Omega(v,N)$ is a convex function of $N$
as far as $\omega_{v}\leq \frac{1}{e^{2}}$. Therefore, by Jensen's
inequality,
\begin{eqnarray}
\mathrm{E}\left\{N\left(1-\frac{v}{u}\right)^{N-1}\right\}=\mathrm{E}\{\Omega(v,N)\}\geq \Omega\left(v,\mathrm{E}\{N\}\right)=\mathrm{E}\{N\}\left(1-\frac{v}{u}\right)^{\mathrm{E}\{N\}-1}
\end{eqnarray}
which is valid as far as $v\geq \left(1-\frac{1}{e^{2}}\right)u$. Hence,
\begin{eqnarray}
\label{veg1}
\eta_{\mathrm{FH}}^{(1)}&=&\frac{1}{2}\max_{v}\left\{v\mathrm{E}\left\{N\left(1-\frac{v}{u}\right)^{N-1}\right\}\right\}\notag\\
&\geq&\frac{1}{2}\max_{v\in\left[\left(1-\frac{1}{e^{2}}\right)u,u\right]}\left\{v\mathrm{E}\left\{N\left(1-\frac{v}{u}\right)^{N-1}\right\}\right\}\notag\\
&\geq&\frac{1}{2}\mathrm{E}\{N\}
\max_{v\in\left[\left(1-\frac{1}{e^{2}}\right)u,u\right]}\left\{v\left(1-\frac{v}{u}\right)^{\mathrm{E}\{N\}-1}\right\}.
\end{eqnarray}
The function $v\left(1-\frac{v}{u}\right)^{\mathrm{E}\{N\}-1}$ is a
concave function in terms of $v$ that achieves its absolute maximum
at $\frac{u}{\mathrm{E}\{N\}}$. Therefore,
\begin{equation}
\label{veg2}
\max_{v\in\left[\left(1-\frac{1}{e^{2}}\right)u,u\right]}\left\{v\left(1-\frac{v}{u}\right)^{\mathrm{E}\{N\}-1}\right\}=\max\left\{1-\frac{1}{e^{2}},\frac{1}{\mathrm{E}\{N\}}\right\}\left(1-\max\left\{1-\frac{1}{e^{2}},\frac{1}{\mathrm{E}\{N\}}\right\}\right)^{\mathrm{E}\{N\}-1}u.
\end{equation}
Using (\ref{veg1}) and (\ref{veg2}),
\begin{equation}
\label{veg3}
\eta_{\mathrm{FH}}^{(1)}\geq \frac{1}{2}\max\left\{1-\frac{1}{e^{2}},\frac{1}{\mathrm{E}\{N\}}\right\}\left(1-\max\left\{1-\frac{1}{e^{2}},\frac{1}{\mathrm{E}\{N\}}\right\}\right)^{\mathrm{E}\{N\}-1}\mathrm{E}\{N\}u.\end{equation}
Hence, a sufficient condition for $\eta_{\mathrm{FH}}^{(1)}>\eta_{\mathrm{FD}}^{(1)}$ to hold is that
\begin{equation}
\label{veggg}
\max\left\{1-\frac{1}{e^{2}},\frac{1}{\mathrm{E}\{N\}}\right\}\left(1-\max\left\{1-\frac{1}{e^{2}},\frac{1}{\mathrm{E}\{N\}}\right\}\right)^{\mathrm{E}\{N\}-1}>\frac{1}{n_{\max}}.\end{equation}
If $\mathrm{E}\{N\}\geq \frac{e^{2}}{e^{2}-1}$, we have
$\max\left\{1-\frac{1}{e^{2}},\frac{1}{\mathrm{E}\{N\}}\right\}=1-\frac{1}{e^{2}}$.
Hence, (\ref{veggg}) reduces to the inequality
$\mathrm{E}\{N\}<\frac{1}{2}\ln\left((e^{2}-1)n_{\max}\right)$.
Therefore, if $\frac{e^{2}}{e^{2}-1}\leq\mathrm{E}\{N\}<
\frac{1}{2}\ln\left((e^{2}-1)n_{\max}\right)$, then (\ref{veggg}) is
satisfied. On the other hand, if $\mathrm{E}\{N\}\leq
\frac{e^{2}}{e^{2}-1}=1.1565$, we get
$\max\left\{1-\frac{1}{e^{2}},\frac{1}{\mathrm{E}\{N\}}\right\}=\frac{1}{\mathrm{E}\{N\}}$.
Thus, (\ref{veggg}) reduces to the inequality
$\frac{1}{\mathrm{E}\{N\}}\left(1-\frac{1}{\mathrm{E}\{N\}}\right)^{\mathrm{E}\{N\}-1}>\frac{1}{n_{\max}}$.
For each $n_{\max}\geq 2$, this yields an upper bound on
$\mathrm{E}\{N\}$. Since
$\frac{1}{\mathrm{E}\{N\}}\left(1-\frac{1}{\mathrm{E}\{N\}}\right)^{\mathrm{E}\{N\}-1}$
is a decreasing function of $\mathrm{E}\{N\}$, the smallest of these
upper bounds is obtained for $n_{\max}=2$ and is equal to $1.2938$.
This means that for $\mathrm{E}\{N\}\leq1.1565$, (\ref{veggg}) is automatically satisfied. Thus, (\ref{veggg}) is equivalent to
\begin{equation}
\mathrm{E}\{N\}< \frac{1}{2}\ln\left((e^{2}-1)n_{\max}\right). \label{dooly}\end{equation}
\section*{Appendix C; Proof of Proposition 2}
We have $\eta_{\mathrm{FD}}^{(2)}=\frac{u}{2n_{\max}}$ and
$\eta_{\mathrm{FH}}^{(2)}=\frac{1}{2}\max_{v}\left\{v\mathrm{E} \left \lbrace \left(1-\frac{v}{u}\right)^{N-1}\right \rbrace \right\}$.
The function $\left(1-\frac{v}{u}\right)^{N-1}$ is  convex  in terms
of $N$. Using Jenson's inequality,
\begin{equation}\label{oiuyt}\eta_{\mathrm{FH}}^{(2)}\geq
\frac{1}{2}\max_{v}\left\{v\left(1-\frac{v}{u}\right)^{\mathrm{E}\{N\}-1}\right\}=\frac{u}{2\mathrm{E}\{N\}}\left(1-\frac{1}{\mathrm{E}\{N\}}\right)^{\mathrm{E}\{N\}-1}.\end{equation}
Hence, a sufficient condition for
$\eta_{\mathrm{FH}}^{(2)}>\eta_{\mathrm{FD}}^{(2)}$ to hold is
\begin{equation}
\frac{1}{\mathrm{E}\{N\}}\left(1-\frac{1}{\mathrm{E}\{N\}}\right)^{\mathrm{E}\{N\}-1}>\frac{1}{n_{\max}}.\end{equation}

\section{Acknowledgments}\
The first author is indebted to M. A. Maddah-Ali and S. Oveis Gharan for their invaluable suggestions. 

    \bibliographystyle{IEEEbib}
      
\end{document}